\definecolor{rev1}{rgb}{0,0,1}
\newtheorem{thm}{Theorem}
\newtheorem{deff}{Definition}
\newtheorem{lem}{Lemma}
\newtheorem{expl}{Example}
\newtheorem{oracle}{Oracle}
\newcommand{\poly}{\mbox{{\rm poly}}}
\newcommand{\dd}{\mbox{{\rm d}}}
\newcommand{\PP}{\rm P}
\newcommand{\FP}{\rm FP}
\newcommand{\NP}{\mbox{\rmfamily\textsc{NP}}} 
\newcommand{\coNP}{\rm coNP}
\newcommand{\FNP}{\mbox{\rmfamily\textsc{FNP}}} 
\newcommand{\TFNP}{\mbox{\rmfamily\textsc{TFNP}}} 
\newcommand{\BQP}{\rm BQP}
\newcommand{\FBQP}{\mbox{\rmfamily\textsc{FBQP}}}
\newcommand{\QMA}{\mbox{\rmfamily\textsc{QMA}}} 
\newcommand{\coQMA}{\mbox{\rmfamily\textsc{coQMA}}} 
\newcommand{\FQMA}{\mbox{\rmfamily\textsc{FQMA}}} 
\newcommand{\TFQMA}{\mbox{\rmfamily\textsc{TFQMA}}} 
\newcommand{\gapQMA}{\rm gapQMA}
\newcommand{\TFgapQMA}{\mbox{\rmfamily\textsc{TFgapQMA}}} 
\newcommand{\CQMA}{\rm QCMA}
\newcommand{\FCQMA}{\rm FQCMA}
\newcommand{\TFCQMA}{\mbox{TFQCMA}}
\newcommand{\MA}{\rm MA}
\newcommand{\FMA}{\rm FMA}
\newcommand{\TFMA}{\rm TFMA}
\newcommand{\BPP}{\rm BPP}
\newcommand{\FBPP}{\rm FBPP}
\newcommand{\QLLL}{\rm QLLL}
\newcommand{\HH}{\mathcal{H}}
\newcommand{\suppress}[1]{}
\newcommand{\SPECT}{\mbox{\rmfamily\rm{Spect}}}
\newcommand{\SPAN}{\mbox{\rmfamily\rm{Span}}}
\def\N{{\mathbb N}}
\begin{document}

\title{Total Functions in QMA}
\date{\today}

\author{Serge Massar}

\affiliation{Laboratoire d'Information Quantique CP224, Universit\'e libre de Bruxelles, B-1050 Brussels, Belgium.}
\email{smassar@ulb.ac.be}
\orcid{0000-0002-4381-2485}
\author{Miklos Santha}
\affiliation{CNRS, IRIF, Universit\'e Paris Diderot, 75205 Paris, France.}
\affiliation{Centre for Quantum Technologies \& MajuLab, National University of Singapore, Singapore.}


\maketitle

\begin{abstract}
The complexity class $\QMA$ is the quantum analog of the classical complexity class $\NP$. The functional analogs of $\NP$ and $\QMA$, called functional $\NP$ ($\FNP$) and functional $\QMA$ ($\FQMA$), consist in either outputting a (classical or quantum) witness, or outputting NO if there does not exist a witness. 
The classical complexity class Total Functional  $\NP$ ($\TFNP$) is the subset of $\FNP$ for which it can be shown that the NO outcome never occurs. $\TFNP$ includes many natural and important problems. Here we introduce the complexity class Total  Functional $\QMA$ ($\TFQMA$), the quantum analog of $\TFNP$. 
We show that $\FQMA$ and $\TFQMA$ can be defined in such a way that they do not depend on the values of the completeness and soundness probabilities. 
We provide examples of problems that lie in $\TFQMA$, coming from areas such as the complexity of $k$-local Hamiltonians and public key quantum money. In the context of black-box groups, we note that Group Non-Membership, which was known to belong to $\QMA$, in fact belongs to $\TFQMA$.  We also provide a simple oracle with respect to which we have a separation between $\FBQP$ and $\TFQMA$. 
\end{abstract}

\section{Introduction}

Classical complexity classes are generally defined as consisting of decision problems. But functional analogs of these classes can also be defined. The functional analog of $\NP$ is denoted $\FNP$ (Functional NP). As a simple example, the functional analog of the travelling salesman problem is the following: given a weighted graph and a length $\ell$, either output a circuit with length less than $\ell$, or output NO if such a circuit does not exist.
The functional analog of $\PP$, denoted $\FP$, is the subset of $\FNP$ for which the output can be computed in polynomial time.

Total functional $\NP$ ($\TFNP$), introduced in \cite{MP91} and which lies between $\FP$ and $\FNP$, is the subset of $\FNP$ for which it can be shown that the NO outcome never occurs. As an example,  factoring (given an integer $n$, output the prime factors of $n$) lies in $\TFNP$ since for all $n$ a (unique) set of prime factors exists, and it can be verified in polynomial time that the factorisation is correct.
 $\TFNP$ can also be defined as the functional analog of $\NP\cap\coNP$ \cite{MP91}. 

$\TFNP$ contains many natural and important problems, including factoring, local search problems\cite{JPY88,PSY90,Kr89}, computational versions of Brouwer's fixed point theorem\cite{P94} and finding Nash equilibria\cite{DGP09,CDT09}. Although there probably do not exist complete problems for TFNP,  there are many syntactically defined subclasses of $\TFNP$ that contain complete problems, and for which some of the above natural problems can be shown to be complete. For recent work in this direction, see \cite{GP17}.

The quantum analog of $\NP$ is $\QMA$ \cite{KSV02}. $\QMA$ has been extensively studied, and contains a rich set of complete problems, see e.g. \cite{B12}. These complete problems are all promise problems. For instance the most famous one, the $k$-local Hamiltonian problem, involves a promise that the ground state energy 
of the input $k$-local Hamiltonian is either less than $b$ or greater than $a$, with $a-b=1/q(n)$, for some polynomial $q(n)$, and the problem is to determine which is the case.

Functional $\QMA$, the problem of producing a quantum state that serves as witness for a $\QMA$ problem was first introduced in the unpublished manuscript \cite{JWB03}.
For instance the functional analog of the
$k$-local Hamiltonian problem is the following: given the classical description of  a $k$-local Hamiltonian, either output a state with energy less than $b$, or output NO if such a state does not exist.

In   \cite{JWB03} it was observed that there is no obvious reduction of $\FQMA$ problems to $\QMA$ problems. This should be opposed to the case of $\NP$ complete problems  for which finding a witness reduces to solving the decision problem.

It is well known that the definition of $\QMA$ does not depend on the values of the completeness and soundness probabilities, as they can be brought exponentially close to $1$ and $0$ respectively \cite{KSV02,MW05,NWZ09}.
We  discuss different definitions of Functional $\QMA$. 
We show that with an appropriate definition, based on the notion of eigenbasis of a quantum verification procedure, one can prove a similar amplification result.
 These theoretical considerations are the topic of Section \ref{PrelDef}.

In Section \ref{PrelDef} we also introduce the functional class $\TFQMA$ (Total Functional $\QMA$) as the subset of $\FQMA$ such that only the YES answer of the $\FQMA$ problem occurs, i.e. for all classical inputs $x$ there exists a witness.
Similarly to $\TFNP$, the 
problems in $\TFQMA$ are not promise problems, 
rather they have a structure such that one can prove that only the YES answer occurs.

The main aim of the present paper is to show that $\TFQMA$ is an interesting and rich complexity class. 
In Section \ref{Sec:ProbTFQMA} we provide  examples of problems that belong to $\TFQMA$. These  are related to problems previously studied in quantum complexity, such as commuting quantum $k$-SAT, commuting $k$-local Hamiltonian, the Quantum Lov\'asz Local Lemma (QLLL) \cite{AKS12} and public key quantum money based on knots \cite{FGHLS12}. We show how these problems can be adapted to fit into the $\TFQMA$ framework. Then in Section \ref{SEC:Rel} we consider relativized problems.  In the context of black-box groups, we show that Group Non-Membership, which was known to belong to $\QMA$ \cite{Watrous}, in fact belongs to $\TFQMA$. We also exhibit problems based on the Quantum Fourier Transform (QFT) and provide a simple oracle with respect to which there is a separation between $\FBQP$ and $\TFQMA$.

In the conclusion 
we present open questions raised by the present work.

\section{Definitions}\label{PrelDef}

\subsection{QMA}\label{SubSec:QMA}

We denote by $\HH_n$  the Hilbert space of $n$ qubits. 
For pure states we use the Dirac ket notation  $\vert \psi \rangle$, whereas for density matrices we just use the Greek letter $\rho$.
We denote by $I_n$ the identity matrix acting on $n$ qubits.

We denote by $\poly$ the set of all functions
$f : \N \to \N$, where $\N=\{1,2,...\}$, 
for which there exists a polynomial time
deterministic Turing machine that outputs $1^{f(n)}$ on input $1^n$. Note that if $f\in \poly$ then there exists a polynomial $q$ such that for all $n\in \N$, $f(n)< q(n)$.

Computational processes that can be carried out in polynomial time are sometimes called \emph{efficient}.

\begin{deff}{\bf Quantum Verification Procedure.} 
\label{def:qvp}
A \emph{quantum verification procedure} is a family of polynomial time uniform quantum circuits $Q=\{Q_{n} : n \in \N\}$  with $Q_n$ taking as input $(x, \vert \psi \rangle\otimes \vert 0^{k(n)} \rangle )$  where 
$x \in \{0,1\}^n$ is a binary string of length $n$, $\vert \psi \rangle$ is a state of  $m(n)$ qubits, and both $m= m(n)$ and $k= k(n)$ belong to $\poly$.
The last $k$ qubits, initialized to the state $\vert 0^k \rangle$, form the \emph{ancilla Hilbert space} $\HH_k$, 
and the $m$-qubit states $\vert \psi \rangle$ form the \emph{witness Hilbert space} $\HH_m$. The outcome of the run of $Q_n$ is a random bit which is obtained by measuring the first qubit in the computational basis.
We denote this outcome by $Q_{n}(x,\vert \psi \rangle)$, and we interpret the outcome $1$ as \emph{accept} and the outcome $0$ as \emph{reject}.
\end{deff}

Note that a quantum verification procedure can 
of course also take as input a mixed state $\rho$, rather than a pure state  $\vert \psi \rangle$. Mixed states can be written as convex combinations of pure states. The acceptance (rejection) probability for the mixed state is the convex combination of the acceptance (rejection) probabilities for the constituent pure states. Abusing slightly the notation, we use the same notation $Q_{n}(x,\rho)$ for the outcome of the quantum verification procedure on the mixed state $\rho$.

\begin{deff}{\bf (a,b)--Quantum Verification Procedure.} 
\label{def:qvpab}
Let $q\in \poly$, and let
$a,b : \N \rightarrow [0,1]$ be polynomial time computable functions
 which satisfy 
\begin{equation}
a(n)-b(n)\geq 1/ q(n)\ .
\end{equation}
We say that a quantum verification procedure $Q$ is an 
$(a,b)$-\emph{quantum verification  procedure} (or shortly an 
$(a,b)$-\emph{procedure})
if for every $x$ of length $n$,  one of the following holds:
\begin{eqnarray}
\exists \vert \psi \rangle\ :\  \Pr [Q_{n}(x,\vert \psi \rangle)=1]\geq a,\ \label{QMA1}\\
\forall \vert \psi \rangle\ :\  \Pr [Q_{n}(x,\vert \psi \rangle )=1]\leq b.\ \label{QMA2}
\end{eqnarray}
We call $a$ and $b$ the completeness and soundness probabilities of the quantum verification procedure.
\end{deff}

\begin{deff}
\label{Deff:QMA_and_coQMA }
{\bf QMA and coQMA.} Let $a,b$ be functions as in Definition$~\ref{def:qvpab}$.
The \emph{class} $\QMA(a,b)$ is the set of languages $L \subseteq\{0,1\}^*$ such that there exists an
$(a,b)$-procedure $Q$, where for every x, we have
$x \in L$ if and only if 
Equation~\eqref{QMA1} holds
(and consequently, $x \notin L$ if and only if 
Equation~\eqref{QMA2} holds).

We call $Q$ a 
\emph{quantum verification procedure for $L$}. For $x\in L$, we say that a $\vert \psi \rangle$ satisfying Equation~\eqref{QMA1}
is a \emph{witness} for $x$.

The \emph{class} $\coQMA(a,b)$ is the set of languages $L \subseteq \{0,1\}^*$ such that there exists an
$(a,b)$-quantum verification procedure $Q'$, where for every x, we have
$x \in L$ if and only if Equation~\eqref{QMA2} holds
(and consequently, $x \notin L$ if and only if 
Equation~\eqref{QMA1} holds).

\end{deff}

It is of course essential to understand to what extent the above definitions depend on the bounds $a$ and $b$. Obviously we can decrease $a$ and increase $b$: 
$\QMA(a,b) \subseteq \QMA(a',b')$
with $a'\leq a$ and $b'\geq b$, so long as  $a'(n)-b'(n) \geq 1/q'(n)$, for some  $q'\in \poly$.

But can one increase $a$ and decrease $b$?
This was first addressed by Kitaev who showed that the separation $a-b$ could be amplified to exponentially close to 1
by using multiple copies of the input state and multiple copies of the verification circuit \cite{KSV02}, that is
by increasing both $m$ and $k$. This was further improved in 
\cite{MW05} (see also \cite{NWZ09})  where it was shown that by running forwards and backwards the original quantum verification procedure, only one copy of the input state was needed to obtain the same amplification, 
that is one needs only increase $k$. 

\begin{thm}
{\bf QMA Amplification \cite{KSV02,MW05,NWZ09}.} 
\label{Thm-QMA-Amplification}
 For all  $a,b$ be functions as in Definition$~\ref{def:qvpab}$, for all $r\in \poly$,
we have
$\QMA(a,b) \subseteq \QMA(1-2^{-r},2^{-r})$. 
\end{thm}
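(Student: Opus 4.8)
The plan is to prove the amplification theorem via the standard technique of running the verification procedure forwards and backwards, following Marriott--Watrous, so that only the ancilla size $k$ needs to grow.

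First I would fix an $(a,b)$-procedure $Q$ for the given completeness and soundness bounds, and for a fixed input $x$ of length $n$ analyze the action of $Q_n$ restricted to the witness space. The key object is the projector $\Pi_x$ onto the subspace of the full Hilbert space $\HH_m\otimes\HH_k$ in which the first output qubit reads $1$ (accept), together with the projector $\Pi_0$ onto states of the form $\vert\psi\rangle\otimes\vert 0^k\rangle$. For each witness $\vert\psi\rangle$ the acceptance probability is $\Pr[Q_n(x,\vert\psi\rangle)=1]=\langle\psi\vert M_x\vert\psi\rangle$, where $M_x = \langle 0^k\vert Q_n^\dagger \Pi_x Q_n\vert 0^k\rangle$ is a positive operator on $\HH_m$ with eigenvalues in $[0,1]$. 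Thus the maximal acceptance probability over all witnesses equals the largest eigenvalue of $M_x$, which is $\geq a$ in the YES case and $\leq b$ in the NO case.

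The heart of the argument is the Marriott--Watrous procedure: starting from a witness $\vert\psi\rangle$, one alternately measures the two projectors $\Pi_0$ and $\Pi_x$ (equivalently applies $Q_n$, measures the output qubit, applies $Q_n^\dagger$, measures whether the ancillas returned to $\vert 0^k\rangle$, and repeats). Using Jordan's lemma, the pair of projectors decomposes the Hilbert space into invariant one- and two-dimensional subspaces, and within each two-dimensional block the alternating measurement behaves like a random walk whose bias is controlled by the relevant eigenvalue of $M_x$. I would then show that after $t$ rounds, taking a majority vote (or an appropriate threshold estimate) of the measurement outcomes separates the case of an eigenvalue $\geq a$ from that of an eigenvalue $\leq b$ with error that decays exponentially in $t$. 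Choosing $t = \poly(n)$ large enough given the inverse-polynomial gap $a-b\geq 1/q(n)$ drives the completeness above $1-2^{-r}$ and the soundness below $2^{-r}$, while the witness register $\HH_m$ is untouched and only the ancilla register $\HH_k$ grows to accommodate the $t$ recorded outcomes.

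The main obstacle, and the step requiring the most care, is establishing that the single optimal witness for $Q$ remains a good witness for the amplified procedure uniformly across the Jordan blocks; concentration (e.g. a Chernoff/Hoeffding bound) must be applied to the sequence of correlated measurement outcomes, and one must verify that the random-walk bias is monotone in the eigenvalue so that the threshold test is correct simultaneously for every eigencomponent of an arbitrary witness. Once this is set up, confirming that the amplified circuit is still polynomial-time uniform and that $k$ grows only polynomially (since $r\in\poly$ and $t=\poly(n)$) is routine. I would therefore invoke the cited analysis of \cite{MW05,NWZ09} for the detailed block decomposition and concentration estimates rather than reproducing them in full.
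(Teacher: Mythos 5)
Your proposal is a correct outline of the Marriott--Watrous forwards-and-backwards amplification, which is exactly what the paper relies on: Theorem~\ref{Thm-QMA-Amplification} is stated there as a cited result of \cite{KSV02,MW05,NWZ09} with no proof reproduced, and the surrounding text describes precisely the procedure you sketch (repeated alternating measurements of the two projectors, with only the ancilla register $k$ growing and the witness register untouched). The one detail worth noting is that the concentration step is cleaner than your caveat suggests, since within each Jordan block the flip/no-flip outcomes of the alternating measurements are independent Bernoulli trials, so the Chernoff bound applies directly.
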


As a consequence the precise values of the bounds $a$ and $b$ are irrelevant. Traditionally they are taken to be $2/3$ and $1/3$. 
We will do here the same.
\begin{deff}
\label{Deff:QMA}
We define the class $\QMA$ as $\QMA(2/3, 1/3).$
\end{deff}
We will come back to the $\QMA$ amplification procedure below.

We now turn to a particular kind of $(a,b)$-procedure which will be our main topic of study: 

\begin{deff}{\bf $a$-Total Quantum Verification Procedure.} 
\label{Deff:Totalqvp}
Let $a : \N \rightarrow [0,1]$ be a polynomially time computable function.
We say that a quantum verification procedure $Q$ is an 
$a$-\emph{total quantum verification  procedure} (or shortly an 
$a$-\emph{total procedure})
if for every $x$ of length $n$,  the following holds:
\begin{equation}
\exists \vert \psi \rangle\ :\   \Pr [Q_{n}(x,\vert \psi \rangle)=1]\geq a \ . \label{TQMA1}
\end{equation}
\end{deff}

Note that an $a$-total procedure is also an $(a,b)$-procedure for all $b$ satisfying the conditions of Definition \ref{def:qvpab}. 
 Note  that the language associated to an $a$-total procedure is $L=\{0,1\}^*$. That is the decision problem for total procedures is trivial, since for all $x \in \{0,1\}^*$ there exists a witness for $x$. Therefore for total procedures, the only interesting questions concern the witnesses.

 In order to prepare for a detailed study of total procedures, we therefore delve deeper into the structure of the witness space.

\subsection{Structure of the witness space.}
\label{SubSec:AmplStructureWitness}

 The methods used in \cite{MW05,NWZ09} to obtain Theorem \ref{Thm-QMA-Amplification} are based on Jordan's lemma \cite{J1875} (for a short proof of Jordan's lemma, see \cite{R06}). The use of Jordan's lemma in this context provides important insights into the structure of the witness space. A succinct proof of this structure was given in
\cite{ABBS08}. We state here these result which will play an important role in what follows.

\begin{thm}
{\bf Structure of witness space \cite{MW05,ABBS08}.} 
\label{Thm:BlockStructure}
Given a  
quantum verification procedure $Q=\{Q_{n}\}$, for all $x \in\{0,1\}^n$,
there exists a basis $B_Q(x)=\{\vert \psi_i\rangle : 1 \leq i \leq 2^m\}$ of the witness space  $\HH_m$
such that
the acceptance probability of linear combinations of the basis states does not involve interferences,
that is for all $\alpha_i$ such that $\sum_i \vert\alpha_i\vert^2=1$, we have
\begin{eqnarray}
 &\Pr [Q_{n}(x,\sum_i \alpha_i \vert \psi_i \rangle) = 1 ]&
\nonumber\\ 
& =
\sum_i \vert\alpha_i\vert^2  \Pr [Q_{n}(x,\vert \psi_i \rangle)
=1]&\ .
\label{Eq:eigenbasis}
\end{eqnarray}
\end{thm}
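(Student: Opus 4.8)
The plan is to recognise that, for each fixed classical input $x$, the acceptance probability of $Q_n$ is a Hermitian quadratic form in the witness $\vert \psi \rangle$, and then to take $B_Q(x)$ to be the eigenbasis supplied by the spectral theorem. Since the outcome of $Q_n$ is obtained by a single projective measurement of the first qubit after the circuit has been run, I would model the action of the circuit on fixed $x$ as a unitary $U_x$ on $\HH_m \otimes \HH_k$, so that the map $V_x : \HH_m \to \HH_m \otimes \HH_k$ defined by $V_x \vert \psi \rangle = U_x(\vert \psi \rangle \otimes \vert 0^k \rangle)$ is an isometry. Writing $\Pi = \vert 1 \rangle\langle 1 \vert \otimes I$ for the projector onto the ``accept'' subspace (first qubit equal to $1$), the acceptance probability becomes $\Pr[Q_n(x,\vert \psi \rangle) = 1] = \| \Pi V_x \vert \psi \rangle \|^2 = \langle \psi \vert M_x \vert \psi \rangle$, where I set $M_x = V_x^\dagger \Pi V_x$.

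The key observation is that $M_x$ is a positive semidefinite operator on the witness space $\HH_m$ with spectrum contained in $[0,1]$: positivity follows because $\Pi \geq 0$ and conjugation by $V_x$ preserves positivity, while $M_x \leq V_x^\dagger V_x = I$ follows from $\Pi \leq I$ together with the isometry property $V_x^\dagger V_x = I$. By the spectral theorem, $M_x$ admits an orthonormal eigenbasis $\{\vert \psi_i \rangle : 1 \leq i \leq 2^m\}$ of $\HH_m$ with real eigenvalues $\lambda_i \in [0,1]$, and this is the basis I would take as $B_Q(x)$. For any $\vert \psi \rangle = \sum_i \alpha_i \vert \psi_i \rangle$ with $\sum_i \vert \alpha_i \vert^2 = 1$, orthonormality gives $\langle \psi \vert M_x \vert \psi \rangle = \sum_i \vert \alpha_i \vert^2 \lambda_i$, and since $\lambda_i = \langle \psi_i \vert M_x \vert \psi_i \rangle = \Pr[Q_n(x, \vert \psi_i \rangle) = 1]$, this is exactly the claimed interference-free identity~\eqref{Eq:eigenbasis}. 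The eigenvalues $\lambda_i$ are then precisely the acceptance probabilities of the basis witnesses, which is what later motivates calling them the spectrum of the procedure.

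The one point that requires care, rather than a genuine obstacle, is justifying that the acceptance probability is exactly the quadratic form $\langle \psi \vert M_x \vert \psi \rangle$ with a single Hermitian operator $M_x$. This rests on the verification being a unitary evolution followed by one projective measurement, with no intermediate measurements on which later dynamics depend; if intermediate measurements are present one first defers them to the end in the standard way, absorbing extra ancilla qubits into $\HH_k$, which leaves the final accept event projective and hence preserves the quadratic-form structure. Once this modelling step is in place, the statement is a direct consequence of the spectral theorem; indeed this structural fact is more elementary than, and in fact underlies, the amplification arguments of \cite{MW05,NWZ09}.
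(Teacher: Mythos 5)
Your proof is correct. It takes a genuinely different (and more elementary) route than the paper. You compress everything into the witness space: the acceptance probability is the quadratic form of the single Hermitian operator $M_x = V_x^\dagger \Pi V_x$ (the POVM element of the accept outcome), and the spectral theorem hands you the eigenbasis and the interference-free formula in one step; your remark about deferring intermediate measurements is consistent with the paper's Definition~\ref{def:qvp}, where the circuit is unitary with a single terminal measurement, so there is no gap there. The paper instead works in the full space $\HH_m \otimes \HH_k$ with \emph{two} projectors, $\Pi_0 = I_m \otimes \vert 0^k\rangle\langle 0^k\vert$ (valid inputs) and $\Pi_1 = V^\dagger\left(\vert \mathrm{acc}\rangle\langle \mathrm{acc}\vert \otimes I\right)V$ (states accepted with certainty), and invokes Jordan's lemma: the Hilbert space splits into one- and two-dimensional blocks invariant under both projectors, and the witness eigenbasis is read off from the vectors $\vert w_i\rangle$ in the range of $\Pi_0$, with acceptance probabilities $\cos^2\theta_i$. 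The two arguments are equivalent in substance --- compressing $\Pi_1$ to the range of $\Pi_0$ is exactly your $M_x$, and the Jordan angles satisfy $\cos^2\theta_i = \lambda_i$ --- but they buy different things. Yours is shorter and self-contained for existence, and it would even give Theorem~\ref{Thm:UniquenessBlockStructure} essentially for free, since the spectral decomposition of a Hermitian operator is unique. The paper's Jordan-lemma route exposes additional structure that your compression hides: the partner vectors in the kernel of $\Pi_0$ and the rotation $F_1 F_0$ with eigenvalues $e^{\pm i 2\theta_i}$, which is precisely the geometry manipulated by the amplification procedures of \cite{MW05,NWZ09}; the paper reuses this structure both in its appendix proof of uniqueness and, implicitly, in establishing that amplification is a strong reduction (Theorem~\ref{Thm:FQMAamplifSR}), i.e.\ that the eigenbasis is preserved while acceptance probabilities transform monotonically. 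So your proof fully establishes the stated theorem, while the paper's heavier machinery is an investment in what comes after it.
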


\begin{proof}
This result follows from the spectral decomposition of the POVM element corresponding to the quantum verification procedure giving outcome $1$, see beginning of Section 6.1 of  \cite{ABBS08} for details.
\end{proof}

\begin{deff}
\label{Deff:eigenbasis_spectrum_qvp}
{\bf Eigenbasis, spectrum and eigenspaces of a quantum verification procedure.}
Fix a  quantum verification procedure $Q=\{Q_{n}\}$, $x \in\{0,1\}^n$, and an eigenbasis $B_Q(x)=\{\vert \psi_i\rangle \}$ of $Q$ for $x$.

Given $\vert \psi_i\rangle\in B_Q(x)$, we call 
\begin{equation}
p_i= \Pr [Q_{n}(x,\vert \psi_i \rangle)
=1]
\end{equation}
 the \emph{acceptance probability} of $\vert \psi_i\rangle$.

We call the set of acceptance probabilities the \emph{spectrum} of $Q$  for $x$:
\begin{eqnarray}
\label{Eq:SPECT(Q,x)}
&\SPECT(Q,x) =\{p\in[0,1] :  \exists \vert \psi_i\rangle \in B_Q(x) &\nonumber\\
&\quad  {\rm such~that\ } 
\Pr [Q_{n}(x,\vert \psi_i \rangle) = 1 ]=p\}\ .&
\end{eqnarray}

Given  $p\in\SPECT(Q,x)$, we call 
\begin{eqnarray}
\label{Eq:HH(Q,x)}
\HH_Q(x,p)&=&\SPAN ( 
\{ \vert \psi_i \rangle \in B_Q(x)  \nonumber\\
& &\ :\  \Pr [Q_{n}(x,\vert \psi_i \rangle) = 1 ]=p\} )
\end{eqnarray}
the \emph{eigenspace} of $Q$ for $x$ with acceptance probability $p$.
\end{deff}

The eigenbasis is not necessarily unique: if two states $\vert \psi_i\rangle, \vert \psi_{i'}\rangle
\in B_Q(x)$ have the same acceptance probability, than  a unitary transformation acting on $\vert \psi_i\rangle, \vert \psi_{i'}\rangle$ yields a new eigenbasis. However, as the following result shows, this is the only freedom one has when choosing an eigenbasis.

\begin{thm}
{\bf Uniqueness of the spectrum and eigenspaces of $Q$.} 
\label{Thm:UniquenessBlockStructure}
Given a  
quantum verification procedure $Q=\{Q_{n}\}$ and $x \in\{0,1\}^*$,
the spectrum  $\SPECT(Q,x)$ of $Q$ 
and
the eigenspaces $\HH_Q(x,p)$ of $Q$ with acceptance probability $p\in \SPECT(Q,x)$ are unique and do not depend on the choice of eigenbasis $B_Q(x)$.
\end{thm}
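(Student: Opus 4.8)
The plan is to exhibit a single, basis-independent Hermitian operator on $\HH_m$ whose eigenvalues and eigenspaces are precisely $\SPECT(Q,x)$ and the $\HH_Q(x,p)$, and then to invoke the uniqueness of the spectral decomposition. First I would write the acceptance probability as a quadratic form. Let $V$ be the unitary implemented by $Q_n$ on input $x$ (with $n=|x|$), acting on $\HH_m\otimes\HH_k$, and let $\Pi_1$ be the projector onto the subspace in which the output qubit reads $1$. Then for every $\vert\psi\rangle\in\HH_m$
\begin{equation}
\Pr[Q_n(x,\vert\psi\rangle)=1]=\bigl\|\Pi_1 V\bigl(\vert\psi\rangle\otimes\vert 0^k\rangle\bigr)\bigr\|^2=\langle\psi\vert A\vert\psi\rangle,
\end{equation}
where $A=(I_m\otimes\langle 0^k\vert)\,V^{\dagger}\Pi_1 V\,(I_m\otimes\vert 0^k\rangle)$. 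The operator $A$ depends only on $Q_n$ and $x$, not on any choice of basis of $\HH_m$; it is Hermitian and positive semidefinite, and since $\langle\psi\vert A\vert\psi\rangle$ is a probability its eigenvalues lie in $[0,1]$.

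Next I would show that an eigenbasis is exactly a basis that diagonalizes $A$. Expanding the left-hand side of Equation~\eqref{Eq:eigenbasis} through the quadratic form gives $\sum_{i,j}\overline{\alpha_i}\,\alpha_j\,\langle\psi_i\vert A\vert\psi_j\rangle$, while the right-hand side is $\sum_i\vert\alpha_i\vert^2\langle\psi_i\vert A\vert\psi_i\rangle$. Imposing equality for all coefficients $\alpha_i$ forces the off-diagonal matrix elements $\langle\psi_i\vert A\vert\psi_j\rangle$, $i\neq j$, to vanish: keeping only two nonzero coefficients and varying their relative phase turns the off-diagonal contribution into $2\,\mathrm{Re}\bigl(e^{\mathrm{i}\theta}\langle\psi_i\vert A\vert\psi_j\rangle\bigr)$, which is identically zero in $\theta$ only if the matrix element itself is zero. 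Thus condition~\eqref{Eq:eigenbasis} holds precisely when $B_Q(x)$ diagonalizes $A$, and then the acceptance probability $p_i=\langle\psi_i\vert A\vert\psi_i\rangle$ is the eigenvalue of $A$ attached to $\vert\psi_i\rangle$.

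The conclusion is then immediate: $\SPECT(Q,x)$ is the set of eigenvalues of $A$, and for each such $p$ the eigenspace $\HH_Q(x,p)$ equals $\ker(A-pI_m)$. Both are intrinsic to the fixed operator $A$ and hence independent of the diagonalizing basis chosen, the only remaining freedom being unitary mixing within a single eigenspace of $A$---which is exactly the freedom noted before the theorem. I expect the sole delicate point to be the phase argument: the no-interference condition for all complex coefficients is what forces every off-diagonal element of $A$ to vanish, whereas real coefficients alone would only constrain their real parts. Everything else is a direct appeal to the uniqueness of the spectral decomposition of a Hermitian operator.
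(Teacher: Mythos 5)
Your proof is correct, and it takes a genuinely different route from the paper's. You compress the whole verification procedure into the single Hermitian acceptance operator $A=(I_m\otimes\langle 0^k\vert)\,V^{\dagger}\Pi_1 V\,(I_m\otimes\vert 0^k\rangle)$ on the witness space, show that the no-interference condition \eqref{Eq:eigenbasis} is equivalent to the basis diagonalizing $A$ (your phase-variation argument is the right one, and you correctly flag that complex phases are essential --- real coefficients would only kill $\mathrm{Re}\,\langle\psi_i\vert A\vert\psi_j\rangle$), and then invoke the spectral theorem: $\SPECT(Q,x)$ is the spectrum of $A$ and $\HH_Q(x,p)=\ker(A-pI_m)$. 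The paper instead works on the full space $\HH_m\otimes\HH_k$ with the two projectors $\Pi_0=I_m\otimes\vert 0^k\rangle\langle 0^k\vert$ and $\Pi_1=V^{\dagger}(\vert 0\rangle\langle 0\vert\otimes I)V$, proves existence via Jordan's lemma, and proves uniqueness by showing that any eigenbasis $\{\vert\psi_{pj}\rangle\vert 0^k\rangle\}$, together with auxiliary states $\vert\psi'_{pj}\rangle$ constructed in the kernel of $\Pi_0$, assembles into Jordan blocks, then appeals to the uniqueness of the Jordan decomposition (itself derived from the spectral decomposition of the unitary $F_1F_0$). The two arguments are closely related --- your $A$ is exactly the compression $\Pi_0\Pi_1\Pi_0$ restricted to the range of $\Pi_0$, and its eigenvalues are the $\cos^2\theta_i$ of the Jordan angles --- but yours is shorter and more self-contained: it bypasses Jordan's lemma entirely, rests only on the spectral theorem for a Hermitian operator, and incidentally proves the existence statement (Theorem \ref{Thm:BlockStructure}) at the same time. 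What the paper's heavier machinery buys is the explicit two-dimensional block geometry between the ancilla-initialization subspace and the accepting subspace, which is precisely what the Marriott--Watrous amplification walk manipulates and what the paper relies on when proving that amplification is a strong reduction (Theorem \ref{Thm:FQMAamplifSR}); your operator $A$ retains the eigenvalues but forgets that relative geometry.
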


\begin{proof}
Follows from the uniqueness of the spectral decomposition of the POVM element corresponding to the quantum verification procedure giving outcome $1$.
\end{proof}

\subsection{Relations}
\label{SubSec:Relations}

Consider a quantum verification procedure $Q$. In this section we are interested in the set of states
 on which $Q$ accepts with high probability. We are also interested in the set of states on which $Q$ rejects with high probability. This leads us to the following definitions.

\begin{deff}
\label{Deff:Accepting_and_rejecting_states}
{\bf Accepting and rejecting density matrices and subspaces.} 
Let $Q=\{Q_n\}$ be a quantum verification procedure and fix $a\in[0,1]$.

We define the following relations over binary strings and density matrices:
\begin{eqnarray}
R_{Q}^{\geq a}(x, \rho) = 1 &\text{ if }& \Pr [Q_{n}(x,\rho)=1]\geq a\ ,\nonumber\\
R_{Q}^{\leq a}(x, \rho) = 1 &\text{ if }& \Pr [Q_{n}(x,\rho)=1]\leq a\ .
\label{EQ:RQgeqleq}
\end{eqnarray}

Using the notion of eigenspace $\HH_Q (x, p)$ of $Q$ introduced previously, we define the following binary relations 
over binary strings and quantum states:
\begin{eqnarray}
\HH_{Q}^{\geq a}(x, \vert \psi \rangle) = 1 &\text{ if }& \vert \psi \rangle \in \SPAN (  \{ \HH_Q (x, p) \ :\  p\geq a\} ) \ ,
\nonumber\\
\HH_{Q}^{\leq a}(x, \vert \psi \rangle) = 1 &\text{ if }& \vert \psi \rangle \in \SPAN (  \{ \HH_Q (x, p) \ :\   p\leq a\} )\ .
\nonumber\\
\label{EQ:HHQgeqleq}
\end{eqnarray}
\end{deff}

To simplify notation, we  denote
\begin{eqnarray}
R_{Q}^{\geq a}(x)&=&
\left\{ \rho \ :\  R_{Q}^{\geq a}(x, \rho) = 1\right\}\ ,\\
R_{Q}^{\leq b}(x)&=&
\left\{ \rho  \ :\   R_{Q}^{\leq b}(x, \rho) = 1\right\}\ ,\\
\HH_{Q}^{\geq a}(x)&=&
\left\{ \vert \psi\rangle  \ :\    \HH_{Q}^{\geq a}(x, \vert \psi\rangle) =1\right\}\ ,\\
\HH_{Q}^{\leq b}(x)&=&\left\{ \vert \psi\rangle  \ :\    \HH_{Q}^{\leq b}(x, \vert \psi\rangle) =1\right\}\ ,
\end{eqnarray}
and we will generally express results in terms of the sets $R_{Q}^{\geq a}(x), R_{Q}^{\leq b}(x)$ and the subspaces $\HH_{Q}^{\geq a}(x), \HH_{Q}^{\leq b}(x)$, rather then the corresponding relations.

The following result explains how these definitions are related.

\begin{thm}{\bf Partial equivalence between accepting and rejecting density matrices and subspaces.}\label{Thm:RelationsFQMA}
 Let $a,b$ be functions as in Definition$~\ref{def:qvpab}$ and let $Q$ be an $(a,b)$-procedure. Then, 
\begin{enumerate}
\item 
we have the inclusion
\begin{eqnarray}
 \HH_{Q}^{\geq a}(x) &\subseteq&  R_{Q}^{\geq a}(x)\nonumber\\
  \HH_{Q}^{\leq a}(x) &\subseteq&  R_{Q}^{\leq a}(x)
\end{eqnarray}
 (where we view $\HH_{Q}^{\geq a}(x) $ and $  \HH_{Q}^{\leq a}(x) $  not as sets of pure states, but as the 
 sets of density matrices associated to these pure states);\\
 \item
and in the other direction, if 
 $R_{Q}^{\geq a}(x)$ is non empty, then
$ \HH_{Q}^{\geq a}(x)$ is non empty, while  if 
 $R_{Q}^{\leq a}(x)$ is non empty, then
$ \HH_{Q}^{\leq a}(x)$ is non empty
\end{enumerate}
\end{thm}

\begin{proof}
We consider the $\geq a$ case, the $\leq a$ case is similar.

Denote by $\{\vert \psi_i\rangle\}$ the basis of eiegnestates of $Q$ for $x$.

Point 1 is trivial: $ \HH_{Q}^{\geq a}(x)$ is constituted of all linear combinations of eigenstates $\vert \psi_i\rangle$  whose acceptance probability $p_i$ is greater or equal than $a$. Hence, using Eq. \eqref{Eq:eigenbasis}, these states belong to $ R_{Q}^{\geq a}(x)$.

Point 2 is also easy. Since $R_{Q}^{\geq a}(x)$ is non empty, there is at least one density matrix $\rho$ whose acceptance probability is greater or equal than $a$. We can write $\rho$ as a convex combination of pure states. At least one of these pure states must have acceptance probability greater or equal than $a$. 
We write this state in the basis of eigenstates as 
\begin{eqnarray}
\vert \psi\rangle &=& \sum_i \alpha_i \vert \psi_i \rangle\nonumber\\
&=&\sum_{i : p_i<a} \alpha_i \vert \psi_i \rangle
+\sum_{i : p_i\geq a} \alpha_i \vert \psi_i \rangle\ .
\end{eqnarray}
Equation  \eqref{Eq:eigenbasis} then implies that at least one of the terms in the sum over $i:p_i\geq a$ must be non vanishing.
\end{proof}

\subsection{Functional QMA}
\label{SubSec:FQMA}

Consider an $(a,b)$-procedure $Q$. We are interested in the functional task of outputting a witness for $Q$, and in defining the corresponding complexity class. 

At first sight, the definition should be in terms of the relation $R_{Q}^{\geq a}(x)$ as this characterises the set of density matrices that will accept with probability larger than the completness threshold $a$. Indeed, this approach  was followed in \cite{JWB03}. However using  $\HH_{Q}^{\geq a}(x)$ as basis for the definition of $\FQMA$ has advantages as we now discuss.

First,
$R_{Q}^{\geq a}(x)$ is not closed under linear combinations: if the projectors onto $\vert \psi \rangle$ and $\vert \psi' \rangle$ belong to $R_{Q}^{\geq a}(x)$, then the projector onto the linear combination 
$a \vert \psi \rangle+ b\vert \psi' \rangle$ does not necessarily belong to $R_{Q}^{\geq a}(x)$. On the other hand $\HH_{Q}^{\geq a}(x)$ is a subspace and hence closed under linear combinations.

Second,
$R_{Q}^{\geq a}(x)$ does not transform simply under the amplification procedure described in \cite{MW05,NWZ09}, while $\HH_{Q}^{\geq a}(x)$ does transform in a simple way, see Theorem \ref{Thm:FQMAamplifSR} below.

We therefore adopt a definition of $\FQMA$ based on the subspace $\HH_{Q}^{\geq a}(x)$.

For reasons that we discuss in the next paragraphs, we define Functional $\QMA$ in terms of two relations:
\begin{deff}
\label{Deff:FQMA(a,b)}
{\bf Functional QMA (FQMA).}
 Let $a,b$ be functions as in Definition$~\ref{def:qvpab}$.
The \emph{ class} $\FQMA(a,b)$ is the set 
$\{(\HH_Q^{\geq a}(x, \vert \psi \rangle), \HH_Q^{\leq b}(x, \vert \psi \rangle))\}$
of pairs of relations,
where $Q$ is an $(a,b)$-procedure. 
\end{deff}

In order to motivate the above definition, it is interesting to consider the following example.

\begin{expl}
\label{Example:No1}
Consider a function $\epsilon:\N \rightarrow [0,1/3)$ that decreases faster than $1/ {\poly}(n)$
for any polynomial $\poly(n)$, for instance $\epsilon(n)=2^{-n-2}$

The example consists of a quantum verification procedure $Q$ whose spectrum 
is the set
\begin{equation}
\SPECT(Q,x) = \{ \frac{1}{3}, \frac{2}{3}-\epsilon(n), \frac{2}{3}\}\ .
\end{equation}
\end{expl}

Consider the interaction between an all powerful prover and a verifier in $\BQP$. The prover wants to convince the verifier that he can produce a witness for $x$ for Example \ref{Example:No1}, which we view as 
a $(2/3,1/3)$-procedure. But it is impossible for the verifier (except possibly by using the structure of $Q$)  to differentiate with high probability between  the case where the prover sends him an eigenstate with acceptance probability $2/3$ (a valid witness), and 
an eigenstate with acceptance probability $2/3 - \epsilon(n)$ (not a valid witness).  On the other hand the verifier will reject with high probability if the prover provides an eigenstate with acceptance probability $1/3$. Thus from the point of view of the verifier it is important to characterise not only what are the valid witnesses, but also what are the states he will reject with high probability. Hence Definition \ref{Deff:FQMA(a,b)} involves two relations.

In the classical case of Functional $\NP$, there are only two kinds of certificates, those which are accepted with probability $1$ and those which are accepted with probability $0$, hence Functional $\NP$ can be described by a single relation. In the case of 
Functional $\QMA$, there are three kinds of states in the witness Hilbert space, those which are accepted with probability greater  than $a$, those which are accepted with probability less than $b$, and the states which have intermediate acceptance probabilities. Hence it is natural to describe Functional $\QMA$ by two relations.

\subsection{$\FQMA$ amplification}

We  now show how the QMA Amplification results \cite{MW05,NWZ09} apply to Functional QMA. We will show that, as for $\QMA$,  the bounds $a$ and $b$ that appear in the
the definition of $\FQMA(a,b)$ can be changed at will. The analysis  is based on  \cite{MW05,NWZ09}, but needs some new concepts, as we need to show that the structure of the witness space does not change  under amplification.

\begin{deff}{\bf Eigenspace preserving map of quantum verification procedures.}
\label{Deff:StrongReduction}
Let $Q$ and $Q'$ be two quantum verification procedures. We say that  there exists \emph{an eigenspace preserving map from $Q$   to $Q'$} if for all $x\in \{0,1\}^*$:  
\begin{enumerate}
\item
there exists a basis $B_Q(x)=\{\vert \psi_i\rangle \}$ of the witness Hilbert space $\HH_m$
which is a joint eigenbasis of $Q$ and $Q'$ for $x$;
\item there exists a polynomial time computable strictly 
increasing 
 function $f:[0,1]\to [0,1]$ 
 such that if
$p_i=\Pr [Q_{n}(x,\vert \psi_i \rangle)
=1]$ is the acceptance probability of $\vert \psi_i\rangle$ for $Q_n$, 
and $p'_i=\Pr [Q'_{n}(x,\vert \psi_i \rangle)
=1]$ is the acceptance probability of $\vert \psi_i\rangle$ for $Q'_n$,  
then $p'_i=f(p_i)$.
\end{enumerate}
In what follows we will refer to an eigenspace preserving map simply as an \emph{e--map}.
\end{deff}

As a consequence, if there exists an e-map from $Q$ to $Q'$, then  most questions about witnesses for $Q$ can be reduced to questions about witnesses for $Q'$, in particular $\HH^{\geq a}_Q (x)=
\HH^{\geq f(a)}_{Q'} (x)$ and  $\HH^{\leq b}_Q (x)=
\HH^{\leq f(b)}_{Q'} (x)$. However $R_{Q}^{\geq a}(x) \neq 
R_{Q'}^{\geq f(a)}(x)$ and $R_{Q}^{\leq a}(x) \neq 
R_{Q'}^{\leq f(a)}(x)$ which is one of the reasons why we define functional $\QMA$ in terms of $\HH^{\geq a}_Q (x)$ and $\HH^{\leq a}_Q (x)$.

The reason why we require that the function $f$ be polynomial time computable
is because we wish that the soundness and completeness thresholds  of $Q$ be mapped onto the soundness and completeness thresholds of $Q'$, where we recall
that the soundness and completeness thresholds must be polynomial time computable, see Definition \ref{def:qvpab}.
That is, if $Q$ is an $(a,b)$-procedure such that there exists an e--map from $Q$ to $Q'$
via $f$, 
then $Q'$ is an $a',b'$-procedure with $a'(n)=f(a(n))$ and $b'(n)= f(b(n))$. 

Note that eigenspace preserving maps are transitive: if there exists an e-map from $Q$ to $Q'$, and if there exists an e-map from $Q'$ to $Q''$, then there exists an e-map from $Q$ to $Q''$. Note also that if we require that the inverse $f^{-1}$ of the strictly increasing function $f$ in Definition \ref{Deff:StrongReduction} is also polynomial time computable, then eigenspace preserving maps are an equivalence relation.

\begin{thm}
{\bf $\QMA$ Amplification\cite{MW05}   is an eigenspace preserving map.} 
\label{Thm:FQMAamplifSR}
Let $Q$ be a quantum verification procedure.
Let 
$a,b$ be functions as in Definition$~\ref{def:qvpab}$.
For all $r\in \poly$ there  exists a quantum verification procedure $Q'$, such that there exists an e-map from $Q$ to $Q'$, and such that 
the polynomial time computable strictly increasing function $f$ that defines the e--map  (see Definition \ref{Deff:StrongReduction}) 
satisfies
$f(a)\geq 1-2^{-r}$ and $f(b)\leq 2^{-r}$.
\end{thm}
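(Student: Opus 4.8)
The plan is to establish Theorem~\ref{Thm:FQMAamplifSR} by taking the explicit amplification construction of \cite{MW05,NWZ09} and verifying that it has the two properties required by Definition~\ref{Deff:StrongReduction}. Recall that the Marriott--Watrous amplification applied to $Q_n$ does not touch the witness register at all: it repeatedly runs the verification circuit forward, measures the ancilla to check whether it has returned to $\vert 0^k\rangle$, then runs it backward, alternately measuring in the two bases. Crucially, this entire procedure is built out of the reflections $R_{\mathrm{init}} = 2\vert 0^k\rangle\langle 0^k\vert - I$ (on the ancilla) and $R_{\mathrm{acc}} = Q_n^\dagger(2\Pi_1\otimes I)Q_n - I$ (where $\Pi_1$ projects the output qubit onto $1$), both of which act nontrivially only within the combined witness-plus-ancilla space while leaving the witness input as a passive parameter.

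The key technical fact I would invoke is the standard Jordan/Marriott--Watrous decomposition: for each $x$, the operator $M = \langle 0^k\vert Q_n^\dagger(\Pi_1\otimes I)Q_n\vert 0^k\rangle$, which is a positive operator on the witness space $\HH_m$ with $\langle\psi\vert M\vert\psi\rangle = \Pr[Q_n(x,\vert\psi\rangle)=1]$, governs the entire dynamics. First I would diagonalize $M$; its eigenvectors are precisely an eigenbasis $B_Q(x)=\{\vert\psi_i\rangle\}$ in the sense of Theorem~\ref{Thm:BlockStructure}, with eigenvalues $p_i$. The whole point of the Jordan-subspace analysis is that the amplified procedure $Q'$ acts on each one-dimensional (or two-dimensional Jordan) block independently, so that its acceptance probability on $\vert\psi_i\rangle$ is a fixed function of $p_i$ alone, namely $p'_i = f(p_i)$ where $f$ is the amplifying polynomial/function realized by the chosen number of reflection rounds. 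This simultaneously gives property (1) — the very same $B_Q(x)$ is an eigenbasis for $Q'$ — and property (2) with a single $x$-independent function $f$.

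The remaining steps are then: (i) argue that $f$ is strictly increasing, which follows because the amplification is monotone in the input acceptance probability (larger overlap of $M$ with $1$ can only increase after error reduction), and that $f$ is efficiently computable since it is the explicit closed form coming from $r$ rounds of the forward-backward procedure; and (ii) choose the number of rounds, which is polynomial in $r$ and in the inverse gap $q(n)$, so that $f(a)\geq 1-2^{-r}$ and $f(b)\leq 2^{-r}$. The latter is exactly the content of Theorem~\ref{Thm-QMA-Amplification}, so I can quote it for the numerical bounds once the structural claim is in place.

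I expect the main obstacle to be the two-dimensional Jordan blocks, not the numerics. When $p_i\in\{0,1\}$ the block is degenerate, but for generic $0<p_i<1$ the relevant invariant subspace is two-dimensional (spanned by the initialized witness state and its reflected partner), and one must check that the amplified acceptance probability still depends only on $p_i$ and that the \emph{witness} eigenvector within that block is carried through unchanged, so that $\vert\psi_i\rangle$ remains a genuine eigenstate of $Q'$ with a well-defined acceptance probability rather than getting mixed across blocks. Verifying that the forward-backward dynamics preserves each Jordan block and never couples eigenstates with distinct $p_i$ is the crux; once that block-diagonality is confirmed, defining $f$ and reading off its monotonicity and efficiency is routine.
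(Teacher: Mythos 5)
Your proposal is correct and follows essentially the same route as the paper, whose proof of Theorem~\ref{Thm:FQMAamplifSR} is simply the one-line assertion that ``one checks'' the Marriott--Watrous amplification of \cite{MW05} is a strong reduction; your write-up supplies exactly the check the paper leaves implicit (diagonalizing $M=\langle 0^k\vert Q_n^\dagger(\Pi_1\otimes I)Q_n\vert 0^k\rangle$, confirming the Jordan blocks are preserved so the same eigenbasis works for $Q'$, and reading off a monotone, efficiently computable $f$ with the stated bounds). You correctly identify the block-invariance of the forward--backward dynamics as the crux, and that is indeed what the paper's appendix-level machinery (Theorems~\ref{Thm:BlockStructure} and \ref{Thm:UniquenessBlockStructure}) is built on.
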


\begin{proof}
One checks that the $\QMA$ amplification procedure of \cite{MW05}   is an eigenspace preserving map with the above properties.
\end{proof}

Note that the amplification procedure of \cite{MW05} does not allow us to choose $f(a)$ and $f(b)$ arbitrarily.  For this reason we introduce the following deamplification procedure.

\begin{thm}
{\bf QMA Deamplification   is an eigenspace preserving map.} 
\label{Thm:FQMADeamplifSR}
Let $Q$ be a quantum verification procedure. Let 
$a,b$ and $a',b'$ be pairs of functions as in Definition$~\ref{def:qvpab}$
with $a\geq a' > b' \geq b$.
Then there exists a quantum verification procedure $Q'$, such that there exists an e-map from $Q$ to $Q'$, and such that 
the polynomial time computable strictly increasing function $f$ that defines the e--map  (see Definition \ref{Deff:StrongReduction}) 
satisfies
$f(a)=a'$ and $f(b)=b'$.
\end{thm}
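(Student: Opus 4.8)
The plan is to construct $Q'$ from $Q$ by having $Q'$ simulate $Q$ and then apply a classical post-processing step to the accept/reject bit that randomises the outcome in a way that deforms the acceptance probability from $p$ to a prescribed value $f(p)$. Since a strong reduction only requires that $Q$ and $Q'$ share a joint eigenbasis and that their acceptance probabilities be related by a single efficiently computable strictly increasing function $f$ applied pointwise, the key observation is that any operation that acts \emph{only on the final accept bit} of $Q$ (and not on the witness register) automatically preserves the eigenbasis: if $\{\vert\psi_i\rangle\}$ is an eigenbasis for $Q$ with the no-interference property of Equation~\eqref{Eq:eigenbasis}, then running $Q$ and classically transforming its output bit cannot introduce interference between distinct eigenstates, so the same basis remains an eigenbasis for $Q'$. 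This reduces the problem to an entirely classical one: find an efficiently computable strictly increasing $f:[0,1]\to[0,1]$ with $f(a)=a'$, $f(b)=b'$, and then realise $f$ as the acceptance probability of a one-bit probabilistic procedure applied to a coin of bias $p$.

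The first concrete step I would take is to choose $f$ to be an affine map, namely
\begin{equation}
f(p) = \frac{a'-b'}{a-b}\,(p-b) + b' \ .
\end{equation}
This is manifestly strictly increasing (the slope $(a'-b')/(a-b)$ is strictly positive because $a>b$ and $a'>b'$), efficiently computable from the polynomially time computable functions $a,b,a',b'$, and it satisfies $f(a)=a'$ and $f(b)=b'$ by direct substitution. I would then verify that $f$ maps $[0,1]$ into $[0,1]$, or restrict attention to the relevant part of its domain; since the slope is at most $1$ when $a'-b'\le a-b$ (which holds because the gap cannot increase under deamplification given $a\ge a'>b'\ge b$), the affine map is a contraction towards the interior, so $f([0,1])\subseteq[0,1]$ follows from checking the endpoints.

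The second step is to implement $f$ physically. Writing $f(p)=\mu p + \nu$ with $\mu=(a'-b')/(a-b)\in(0,1]$ and $\nu = b' - \mu b \ge 0$, the procedure $Q'$ runs $Q$ once to obtain the bit $c\in\{0,1\}$, then outputs $1$ with probability $\mu$ if $c=1$ and outputs $1$ with probability $\nu/(1-\mu)\cdot(\text{correction})$ if $c=0$; more cleanly, $Q'$ accepts if (i) $c=1$ and a fresh coin of bias $\mu$ comes up heads, or (ii) an independent coin of bias $\nu$ comes up heads, with a small adjustment to avoid double counting. One checks that the resulting acceptance probability on eigenstate $\vert\psi_i\rangle$ equals $\mu p_i + \nu = f(p_i)$, as required. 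The auxiliary biased coins are preparable to sufficient accuracy by a polynomial-size circuit because $\mu$ and $\nu$ are efficiently computable, so $Q'$ is indeed a polynomial time uniform quantum verification procedure in the sense of Definition~\ref{def:qvp}.

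The main obstacle I anticipate is not the construction itself but verifying the eigenbasis condition rigorously and handling the accuracy of the biased coins. One must argue that preparing a coin of bias $\mu$ only approximately (to within inverse-exponential error, which is all a finite circuit achieves) does not spoil either the exact relation $p'_i=f(p_i)$ demanded by Definition~\ref{Deff:StrongReduction} or the strict monotonicity of $f$; the cleanest route is to absorb the approximation into the definition of the functions $a',b'$ so that the realised $f$ hits slightly perturbed but still valid targets, or to note that with polynomially many ancilla bits one can represent $\mu$ and $\nu$ exactly as dyadic rationals to the precision at which $a,b,a',b'$ are specified. I would present the affine $f$ and the coin-flipping implementation, then devote the bulk of the argument to confirming that the joint eigenbasis is preserved and that the acceptance probabilities transform exactly as claimed.
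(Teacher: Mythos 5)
Your proposal is correct and follows essentially the same route as the paper: the paper's proof also post-processes only the final accept/reject bit of $Q$, accepting with probability $z$ (resp.\ $z'$) when $Q$ accepts (resp.\ rejects), which realises exactly your affine map $f(p)=\frac{a'-b'}{a-b}(p-b)+b'$ written as $f(p)=(z-z')p+z'$, and solves $f(a)=a'$, $f(b)=b'$ for $z,z'\in[0,1]$ using $a\geq a'>b'\geq b$. Your two-coin implementation with a double-counting correction is a slightly clumsier (but equivalent) way of phrasing the paper's single conditional coin, and your attention to the eigenbasis preservation and coin-precision issues only makes explicit what the paper leaves as ``immediate to check.''
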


\begin{proof}
We construct $Q'$ as follows. 

Let $z,z':\N \to [0,1]$, with $z>z'$, be two polyomial time computable functions to be fixed below.

On any input $(x,\vert \psi \rangle)$ run $Q$; if $Q$ accepts, then accept with probability $z\in[0,1]$ and reject with probability $1-z$; 
if $Q$ rejects, then accept with probability $z'\in[0,1]$ and reject with probability $1-z'$.

It is immediate to check that $Q$ e--maps to $Q'$, with the 
the strictly increasing 
function $f$ that defines the e--map (see Definition \ref{Deff:StrongReduction}) given by
\begin{equation}
f(p)=(z-z')p+z'\ .
\label{Eq:f(p)}
\end{equation}

We now solve for $z$ and $z'$ the equations $f(a)=a'$ and $f(b)=b'$. It is easy to check that $z,z'$ are rational functions of $a,b,a',b'$ hence polynomial time computable, that
 $z,z'\in [0,1]$, and that $z>z'$ since $a\geq a' > b' \geq b$.
\end{proof}

\begin{thm}
{\bf $\FQMA$ is independent of the bounds $(a,b)$.} 
\label{Thm:FQMASR}
Let $Q$ be a quantum verification procedure. Let 
$a,b$ and $a',b'$ be pairs of functions as in Definition$~\ref{def:qvpab}$
with $a' < 1-2^{-r}$ and $b'> 2^{-r}$,
for some $r\in \poly$.
Then there exists a quantum verification procedure $Q'$, such that there exists an e-map from $Q$ to $Q'$, and such that 
the strictly increasing function $f$ that defines the e--map  (see Definition \ref{Deff:StrongReduction}) 
satisfies
$f(a)=a'$ and $f(b)=b'$.
\end{thm}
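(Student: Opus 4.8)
The plan is to obtain $Q'$ by composing two strong reductions: first \emph{amplify} $Q$ so that the images of $a$ and $b$ overshoot the targets $a'$ and $b'$, and then \emph{deamplify} to land exactly on $a'$ and $b'$. Transitivity of strong reduction (noted after Definition \ref{Deff:StrongReduction}) then gives that $Q$ strongly reduces to $Q'$, and the defining functions compose.

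First I would apply Theorem \ref{Thm:FQMAamplifSR} with the same $r \in \poly$ that appears in the hypothesis. This produces a procedure $Q_1$, with $Q$ strongly reducing to $Q_1$ via a strictly increasing, efficiently computable function $f_1$, satisfying $f_1(a) \geq 1 - 2^{-r}$ and $f_1(b) \leq 2^{-r}$. Writing $a_1 = f_1(a)$ and $b_1 = f_1(b)$, these bounds are polynomially time computable (a composition of efficiently computable maps), and the gap $a_1 - b_1 \geq 1 - 2^{-r+1}$ is at least inverse-polynomial, so $(a_1,b_1)$ is a valid pair in the sense of Definition \ref{def:qvpab}.

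Next I would apply Theorem \ref{Thm:FQMADeamplifSR} to $Q_1$, instantiating its source pair by $(a_1,b_1)$ and its target pair by $(a',b')$. The hypotheses of that theorem demand $a_1 \geq a' > b' \geq b_1$. The middle inequality $a' > b'$ holds because $(a',b')$ is a valid pair. For the two outer inequalities, this is precisely where the assumptions $a' < 1 - 2^{-r}$ and $b' > 2^{-r}$ are used: they give $a_1 \geq 1 - 2^{-r} > a'$ and $b_1 \leq 2^{-r} < b'$, so the amplified bounds sandwich the targets. Theorem \ref{Thm:FQMADeamplifSR} then yields a procedure $Q'$, with $Q_1$ strongly reducing to $Q'$ via a strictly increasing, efficiently computable $f_2$ satisfying $f_2(a_1) = a'$ and $f_2(b_1) = b'$.

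Finally I would combine the two reductions: by transitivity $Q$ strongly reduces to $Q'$, with the strong reduction defined by $f = f_2 \circ f_1$, which is again strictly increasing and efficiently computable. Evaluating, $f(a) = f_2(f_1(a)) = f_2(a_1) = a'$ and $f(b) = f_2(f_1(b)) = f_2(b_1) = b'$, as required. The only step needing care — and the main, if minor, obstacle — is checking that the intermediate bounds $(a_1,b_1)$ produced by amplification satisfy the sandwiching hypothesis $a_1 \geq a' > b' \geq b_1$ of the deamplification theorem; this is exactly what the two strict assumptions on $a'$ and $b'$ guarantee, so no genuine difficulty remains.
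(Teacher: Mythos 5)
Your proposal is correct and follows essentially the same route as the paper's own proof: amplify via Theorem \ref{Thm:FQMAamplifSR} to overshoot the targets, then deamplify via Theorem \ref{Thm:FQMADeamplifSR} to hit $(a',b')$ exactly, composing the two strong reductions by transitivity. Your explicit check that the hypotheses $a' < 1-2^{-r}$ and $b' > 2^{-r}$ are precisely what guarantee the sandwiching condition $f_1(a) \geq a' > b' \geq f_1(b)$ required by the deamplification theorem is a point the paper leaves implicit, and it is a welcome addition.
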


\begin{proof}
We first use the amplification procedure of \cite{MW05}   to construct an intermediate quantum verification procedure $Q''$, such that there exists an e-map from $Q$ to $Q''$ (as follows from Theorem \ref{Thm:FQMAamplifSR}).
The parameters of the amplification procedure are chosen such that
the strictly increasing function $f_1$ that defines the e--map from $Q$ to $Q''$  (see Definition \ref{Deff:StrongReduction}) 
satisfies
$f_1(a)\geq 1-2^{-r}$ and $f_1(b)\leq 2^{-r}$.

We then apply to $Q''$ deamplification as in Theorem
 \ref{Thm:FQMADeamplifSR} to obtain the quantum verification procedure $Q'$. The parameters of the deamplification procedure are chosen such that
the strictly increasing function $f_2$ that defines the e--map from $Q''$ to $Q'$  (see Definition \ref{Deff:StrongReduction}) 
satisfies
$f_2(f_1(a))=a'$
and $f_2(f_1(b))=b'$.
\end{proof}

As a consequence of Theorem \ref{Thm:FQMASR}, the precise values of the bounds $a$ and $b$ are 
 irrelevant to the definition of $\FQMA$.
 Therefore, similarly to the definition of $\QMA$ we  make the following definition.
\begin{deff}
\label{Deff:FQMA}
We define the \emph{class} $\FQMA$ as $\FQMA(2/3, 1/3)$.
\end{deff}

\subsection{FBQP}
\label{SubSec:FBQP}

The class $\BQP$ is the set of decision problems that can be efficiently solved on a quantum computer.

\begin{deff}
\label{Deff:EffState}
{\bf Efficiently preparable states.} 
Let $m \in \poly$.
A family of density matrices $\{ \rho(x) : x\in\{0,1\}^n,  n \in \N\}$
is efficiently preparable if 
$\rho(x)$ acts on $\HH_{m(n)}$ and if
there exists 
a 
polynomial time 
uniform family of quantum circuits $Q=\{Q_{n} : n \in \N\}$ 
with
$Q_n$ taking as input $(x,  \vert 0^{k} \rangle )$
with 
 $x \in \{0,1\}^n$ 
and  $k\in \poly$ with $k \geq m$,
and where $\rho(x)$ is obtained by tracing out the last $k-m$ qubits of $Q_n(x)$.
\end{deff}

\begin{deff}
\label{Deff:Lpoly}
{\bf The language class $\BQP$.} 
\label{Def:BQP'}

$\BQP \subseteq \QMA$ is 
the set of languages  $L\subseteq \{0,1\}^*$ such that
there   exists:
\begin{enumerate}
\item  an $(2/3,1/3)$ quantum verification procedure $Q=\{Q_{n} : n \in \N\}$ with
$Q_n$ taking as input $(x, \vert \psi \rangle\otimes \vert 0^k \rangle )$, 
where 
 $x \in \{0,1\}^n$ is a binary string of length $n$,
$\vert \psi \rangle$ is a state of
$m$ qubits, with $m,k\in \poly$;
\item an efficiently preparable set of density matrices $\{\rho(x)\}$ where $\rho(x)$ acts on $\HH_m$;
\end{enumerate}
and where for every x, we have
 $x \in L$ if and only if 
\begin{eqnarray}
\Pr [Q_{n}(x,\rho(x))=1]\geq 2/3,\ \label{BQP3}
\end{eqnarray}
and
 $x \notin L$ if and only if 
\begin{eqnarray}
\forall \vert \psi \rangle, \Pr [Q_{n}(x,\vert \psi \rangle )=1]\leq 1/3.\ \label{QMA2B}
\end{eqnarray} 
 \end{deff}

 \begin{deff}{\bf Functional $\BQP$.}
\label{Deff:FBQP}
The class $\FBQP$ is the subset of pairs of relations 
$\{(\HH_Q^{\geq 2/3}(x, \vert \psi \rangle), \HH_Q^{\leq 1/3}(x, \vert \psi \rangle))\}$ in $\FQMA$, with $Q$ an $(2/3,1/3)$-procedure,
 such that
there exists 
an efficiently preparable set of density matrices $\{\rho(x)\}$ and for all $x$,
if $\HH_Q^{\geq 2/3}(x, \vert \psi \rangle)$ is non empty
then $\rho(x) \in R_Q^{\geq 2/3}(x, \vert \psi \rangle)$.
\end{deff}

\subsection{Total Functional QMA}
\label{SubSec:TFQMA}

We  now address  the central topic of this study, 
the subset of $\FQMA$ for which there always exists a witness. We had previously introduced $a$--total procedures in Definition
\ref{Deff:Totalqvp}. 
We can now define the corresponding functional classes.

\begin{deff}
\label{Deff:Total}
{\bf Totality.}
A pair of relations $(\HH^{\geq a}_Q(x, \vert \psi \rangle), \HH^{\leq b}_Q(x, \vert \psi \rangle))$ in $\FQMA(a,b)$ is called \emph{ total} if for all inputs $x$ there exists at least one witness
$\vert \psi \rangle $, i.e. if $\HH^{\geq a}_Q(x)$ is non empty.
\end{deff}

\begin{deff}
\label{Deff:TFQMA}
{\bf Total Functional  QMA (TFQMA).}
 Let $a,b$ be functions as in Definition$~\ref{def:qvpab}$.
The \emph{ class} $\TFQMA(a,b)$  is the set 
$(\HH_Q^{\geq a}(x, \vert \psi \rangle), \HH_Q^{\leq b}(x, \vert \psi \rangle))$
of pairs of total relations, i.e. the set of pairs of relations in $\FQMA$
where $Q$ is an $a$--total verification
procedure.

The \emph{ class} $\TFQMA=\TFQMA(1/3,2/3)$ is the set of total relations in $\FQMA$.
\end{deff}

We emphasize that problems in $\TFQMA$ are not promise problems:
they satisfy that for all $x$ there exists at least one witness.
In analogy with $\FNP$ and $\TFNP$, we expect  problems in $\TFQMA$ to be simpler than general problems in $\FQMA$.

\subsection{Gapped quantum verification procedures}
\label{SubSec:GapQVP}

A sub-class of quantum verification procedures which will be of interest are those which have a gap in their spectrum. They are defined as follows.

\begin{deff}
{\bf Gapped (a,b)--Quantum Verification Procedure.} 
\label{def:qvpabGAP}
An  $(a,b)$-{procedure} $Q$ is a \emph{ gapped} $(a,b)$-\emph{ 
procedure} if for every $x$ of length $n$, there are strictly no eigenstates with acceptance probability 
comprised between $a$ and $b$. As a consequence the spaces $\HH^{\geq a}_Q (x)$ and $ \HH^{\leq b}_Q (x)$ generate the entire witness Hilbert space:
\begin{equation}
\HH_m =\SPAN (\HH^{\geq a}_Q (x) \cup \HH^{\leq b}_Q (x))\ .
\end{equation}
\end{deff}

\begin{deff}
\label{Deff:gapQMA}
{\bf gapQMA.} Let $a,b$ be functions as in Definition$~\ref{def:qvpab}$.
The \emph{ class} $\gapQMA(a,b)$ is the set of languages $L \subseteq\{0,1\}^*$ such that there exists a gapped
$(a,b)$-procedure $Q$, where for every x, we have
$x \in L$ if and only if 
Equation~\eqref{QMA1} holds
(and consequently, $x \notin L$ if and only if 
Equation~\eqref{QMA2} holds).
\end{deff}

\begin{deff}
\label{Deff:TFgapQMA}
{\bf Total Functional  gap QMA (TFgapQMA).}
 Let $a,b$ be functions as in Definition$~\ref{def:qvpab}$.
The \emph{ class} $\TFgapQMA(a,b)$ is the set of pairs of total relations 
$\{(\HH^{\geq a}_Q(x,\vert \psi\rangle), \HH^{\leq b}_Q(x,\vert \psi\rangle)\}$ for which 
$Q$ is a gapped $(a,b)$-quantum verification procedure.

We define the class $\TFgapQMA$ as $\TFgapQMA(2/3,1/3)$.
\end{deff}

\subsection{$1$-- and/or $0$--Quantum Verification Procedures }
\label{SubSec:10QVP}

Because of the $\FQMA$ amplification theorem, in all the above definitions we can  replace the upper bound $a$ (by convention taken to be $2/3$) in the definition by
$1-2^{-r}$, and the lower bound $b$ (by convention taken to be $1/3$) by $2^{-r}$, for any polynomial $r$. 
However sometimes one can show that one can take $a=1$ and/or $b=0$, which is potentially a stronger statement. 
In particular  the inclusions 
$\TFgapQMA(1,0) \subseteq \TFgapQMA \subseteq \TFQMA$
might be strict.

When $a=1$ and/or $b=0$ we are dealing with  \emph{exact quantum computation}. This means that the quantum circuit is made out of a finite (or possibly enumerable) set of quantum gates, and all operations (state preparation, gates, measurements in the computational basis) are  implemented with zero error.
Exact quantum computation has been studied in several contexts. In particular $\QMA_1
$ is the subset of QMA in which the accepting probability in the case of YES instances is 1, i.e.
$\QMA_1=\QMA(1,1/3)$. Complete problems for $\QMA_1$ were described in \cite{B11,GN16}. For arguments why it appears difficult to prove that $\QMA=\QMA_1$, see \cite{A09B}.

To simplify notation when dealing with exact quantum computation, we write
\begin{eqnarray}
R_{Q}^{1}(x)&= &R_{Q}^{\geq1}(x)\ ,\\
R_{Q}^{0}(x)& = &R_{Q}^{\leq 0}(x)\ ,\\
\HH_{Q}^{1}(x)&=&\HH_{Q}^{\geq 1}(x)\ ,\\
\HH_{Q}^{0}(x)&=&\HH_{Q}^{\leq 0}(x)\ .
\end{eqnarray}

\section{Problems in TFQMA}\label{Sec:ProbTFQMA}

\subsection{Preliminary Considerations}

\subsubsection{Introduction}

In this section and the next one we provide examples of problems in $\TFQMA$ which are not obviously in $\FBQP$.

At the heart of any such example is a quantum verification procedure $Q$. The minimum requirements for an example to be included in our list is to be able to show that $Q$ is an $a$-total quantum verification procedure, i.e. that $R_{Q}^{\geq a}(x)$ is non empty for all $x$ (and consequently $\HH_{Q}^{\geq a}(x)$ is non empty).
However in some cases we can also determine the eigenbasis $B_Q(x)=\{\vert \psi_i\rangle \}$ of $Q$ for $x$, and (at least partially) characterise the pair of relations $(\HH^{\geq a}_Q(x, \vert \psi \rangle), \HH^{\leq b}_Q(x, \vert \psi \rangle))$ in $\TFQMA(a,b)$.
The precise formulation of the examples will depend on how much we can say about them.

When possible we shall formulate the problems in such a way that they belong to $\TFgapQMA(1,0)$, as this is the strongest statement.

For convenience in what follows, the input size will not necessarily be denoted by $n$, which can be reserved
for some parameter in the input.

\subsubsection{Measurements}

It is often convenient to describe  some of the steps of a quantum verification procedures as a measurement. By this we mean that we carry out  an \emph{ ideal} quantum measurement (also known as a \emph{ quantum non demolition measurement}) which is realised as  a unitary transformation. Consider the operator $A=\sum_a a \Pi_a$, where the $a$'s 
are the eigenvalues of $A$ and $\Pi_a$ are orthogonal projectors. Measuring $A$  on state $\vert \psi \rangle$  corresponds to carrying out
 the unitary evolution
\begin{equation}
\vert \psi \rangle \vert 0\rangle
\to \sum_a  \Pi_a \vert \psi \rangle \vert a\rangle
\end{equation}
 where the second register is an ancilla that registers the outcome of the measurement. Subsequent operations can then be carried out conditional on the state of the ancilla.
 
 A specific measurement we will use several times is the projection on the antisymmetric and symmetric spaces. This can be realised by implementing the SWAP test \cite{BCWD01} as follows:
one carries out a Hadamard transform on an ancilla, then a conditional SWAP, and finally a Hadamard on the ancilla.
 \begin{eqnarray}
  \vert \phi \rangle \vert \psi \rangle \vert 0\rangle
  &\to & \frac{1}{\sqrt{2}}
    \vert \phi \rangle \vert \psi \rangle\left ( \vert 0\rangle +  \vert 1\rangle\right)\nonumber\\
    &\to &
    \frac{1}{\sqrt{2}}
   \left (  \vert \phi \rangle \vert \psi \rangle \vert 0\rangle + \vert \psi \rangle \vert \phi \rangle  \vert 1\rangle\right)\nonumber\\
    &\to &    
       \frac{1}{2}
   \left (  \vert \phi \rangle \vert \psi \rangle  + \vert \psi \rangle \vert \phi \rangle \right) \vert 0\rangle\nonumber\\
   & &
+
     \frac{1}{2}
   \left (  \vert \phi \rangle \vert \psi \rangle  - \vert \psi \rangle \vert \phi \rangle \right) \vert 1\rangle 
\end{eqnarray}
After these operations, if the ancilla is in the $\vert 0\rangle$ state one has projected onto the symmetric space, while if it is in the $\vert 1\rangle$ state one has projected onto the anti--symmetric space.
In the first case we say that the SWAP test has outputted "Symmetric", while in the second case that it has outputted "Anti-symmetric".

\subsection{Eigenstates of commuting $k$-local Hamiltonian}

\subsubsection{Background}

\begin{deff}
\label{Deff:klocalH}
{\bf $k$-local Hamiltonian.} 
Fix $k,d\in \N$. 
A qudit is a quantum system of dimension $d$. Denote by $\HH$ the Hilbert space of $n$ qudits.
Let $A\in \poly$.
A $k$-local Hamiltonian is a Hermitian matrix acting on $\HH$ which can be written as
$H=\sum_{a=1}^{A(n)} H_a$,
where each term $H_a$ (sometimes called constraint) is a Hermitian operator that acts non trivially on at most $k$ qudits and whose matrix elements in the computational basis have an efficient classical description.
\end{deff}

The $k$-local Hamiltonian problem is to determine whether the ground state of $H$ has energy $\leq b$ or $\geq a$, with $a-b\geq 1/\poly(n)$, for some polynomial $\poly(n)$, with the promise that only 
one of these cases occurs. 
The $k$-local Hamiltonian problem is QMA complete~\cite{KSV02,KR03} even when $k=2$~\cite{KKR06}.

The \emph{ commuting} $k$-local Hamiltonian is the case where the operators $H_a$ commute. 
It was shown by Bravyi and Vyalyi that the commuting 2-local Hamiltonian problem is in NP~\cite{BV}. Some additional cases of commuting $k$-local Hamiltonian problems also in NP are: the 3-local Hamiltonian where the systems are qubits~\cite{AE11}, the 3-local Hamiltonian where the systems are qutrits and the interaction graph is planar or more generally nearly Euclidean~\cite{AE11}, 
the planar square lattice of qubits with plaquette-wise interactions~\cite{S11}; approximating the ground state energy when the interaction graph  is a locally expanding graph~\cite{AE15}. The complexity of the commuting $k$-local Hamiltonian problem in the general case is unknown.

A particularly interesting case is when each $H_a$ is a projector, that is has only $0,1$ eigenvalues. The $k$-local Hamiltonian problem in this case reduces to the question whether $H$ has a frustration free eigenstate, an eigenstate with eigenvalue $0$. This is known as quantum $k$-SAT ( denoted $k$-QSAT), and was introduced in~\cite{B11} where it was shown that 2-QSAT is in P and  $k$-QSAT for $k\geq 4$ is 
$\QMA_1$ complete (where $\QMA_1$ is the subset of QMA in which the accepting probability in the case of YES instances is 1). It was later shown that  3-QSAT is also $\QMA_1$ complete~\cite{GN16}.

\subsubsection{Notation}

Consider a commuting $k$-local Hamiltonian $H=\sum_{a=1}^A H_a$, $[H_a,H_{a'}]=0$, acting on $n$ qubits.
Since the $H_a$'s are hermitian and commute, they possess a common eigenbasis. That is, there exists a basis ${\cal B} = \{\vert \psi_{hj}\rangle\}$ of the Hilbert space 
$\HH_n$, where each basis state $\vert \psi_{hj}\rangle$ is also an eigenstate of all the constraints $H_a$:
\begin{eqnarray}
H_a \vert \psi_{hj}\rangle &=& h_a \vert \psi_{hj}\rangle\ ,\\
\langle \psi_{h'j'}\vert \psi_{hj}\rangle &=&\delta_{h'h}\delta_{j'j} \ .
\end{eqnarray}

We denote by
$h=(h_1,...,h_A)\in \mathbb{R}^A$    the string of eigenvalues of $H_a$. The same symbol $h$ is also used as the  first index  in labelling the basis states $\vert \psi_{hj}\rangle$.
The second index $j\in J_h$ labels orthogonal states within the subspaces with  the same eigenvalue $h$.

We denote by $E$ the energy of the eigenstate:
\begin{equation}
E=\sum_{a=1}^A h_a\ .
\end{equation}
We denote
by 
\begin{equation}
{F}=\{ (h,j)\}
\end{equation}
 the sets of indices of the basis ${\cal B}$. Since ${\cal B}$ is a basis we have $\vert {F} \vert = 2^n$. 
We denote by 
\begin{equation}
{G}=\{ h: \exists j ~(h,j) \in F\}
\end{equation}
the set of possible  strings of eigenvalues of $H_a$. 
Since there may exist orthogonal eigenstates with the same string of eigenvalues,  $\vert G \vert \leq 2^n$ with equality not necessarily attained.

Note that 
given  an eigenstate  $\vert \psi_{hj}\rangle$, one can efficiently determine the string $h=(h_1,...,h_A)$ 
by measuring each $H_a$ in succession, where the order is immaterial since the $H_a$ commute.

In the case where each $H_a$ is a projector,  $h_a \in \{0,1\}$, and $h \in \{0,1\}^A$.

\subsubsection{Frustration-Free or Degenerate Eigenspace of commuting quantum k-SAT}
\label{subsubsectionqkSAT}

\begin{deff} \label{Prob:QkSAT}
{\bf Frustration free or degenerate eigenspace of commuting quantum $k$-SAT with $n$ constraints.} 
Denote by $x$ the classical description of a
 commuting $k$-local Hamiltonian acting on the space of $n$ qubits $\HH_n$,  with $A = n$  constraints with $0,1$ eigenvalues (projectors), and where by hypothesis 
  each constraint can be measured with zero error in polynomial time using a quantum computer.
 
Denote by $\HH^1(x)$  the subspace of the space of $2n+1$ qubits $\HH_{2n+1}$ spanned by 
states satisfying one of the two conditions:
\begin{enumerate}
\item States with the first qubit set to $0$, the next $n$ qubits a frustration free state, i.e. a state such that 
 its eigenvalue sequence is $h=0^A$, and the last $n$ qubits are in an arbitrary state;
 \item States with the first qubit set to $1$ and the remaining $2n$ qubits  the antisymmetric linear combination of  two orthogonal eigenstates  with the same eigenvalue sequence $h=(h_1,...,h_n)$:
 \end{enumerate}
\begin{eqnarray}
\label{EQ:HH1X}
\HH^1(x)&=&\SPAN
( \nonumber\\
& &
\{
\vert 0\rangle \vert \psi_{ 0^Aj}\rangle \vert \psi_{h'j'}\rangle \ :\  j\in J_{0^A},  (h',j')\in F
\}
\nonumber\\
& \cup &
 \{
\frac{1}{\sqrt{2}}\vert 1\rangle\left( \vert \psi_{hj} \rangle  \vert \psi_{hj'} \rangle - 
\vert \psi_{hj'}\rangle \vert \psi_{hj} \rangle\right)\nonumber\\
& & \quad \quad
\ :\ h\in G, j,j' \in J_h , j\neq j' \}
\nonumber\\
& ).&
\end{eqnarray} 
 
  Denote by $\HH^0(x)$ the orthogonal subspace:
  \begin{eqnarray}
\label{EQ:HH0X}
&\HH^0(x)&=\SPAN (
 \nonumber\\
& &
\{ \vert 0\rangle \vert \psi_{ hj}\rangle \vert \psi_{h'j'}\rangle \ :\  (h,j), (h',j')\in F , h\neq 0^A\} \nonumber\\
& & \cup \{
\vert 1\rangle\vert \psi_{hj} \rangle  \vert \psi_{h'j'} \rangle
\ :\  (h,j), (h',j')\in F , h\neq h'  \} \nonumber\\
& &\cup \{
\vert 1\rangle  \vert \psi_{hj} \rangle  \vert \psi_{hj} \rangle 
\ :\ (h,j)\in F \}
\nonumber\\
& &\cup \{
\frac{1}{\sqrt{2}}\vert 1\rangle \left( \vert \psi_{hj} \rangle  \vert \psi_{hj'} \rangle + 
\vert \psi_{hj'}\rangle \vert \psi_{hj} \rangle\right)\nonumber\\
& & \quad \quad
\ :\  h\in G, j,j' \in J_h, j\neq j'  \}
\nonumber\\
& & )
\nonumber\\
\end{eqnarray}
\end{deff}

\begin{thm} The pair of subspaces $(\HH^1(x),\HH^0(x))$ defined in Definition~\ref{Prob:QkSAT} belong to $(1,0)$-total functional gap $\QMA$: 
\begin{equation}
(\HH^1(x),\HH^0(x)) \in \TFgapQMA (1,0)\ .
\end{equation}
\label{Thm:QkSAT}
\end{thm}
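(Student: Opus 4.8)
The plan is to exhibit an explicit quantum verification procedure $Q$ on the $2n+1$ qubit witness space, to show that the spanning sets listed in Equations~\eqref{EQ:HH1X} and~\eqref{EQ:HH0X} form an eigenbasis of $Q$ all of whose acceptance probabilities are exactly $0$ or $1$, and finally to establish totality by a counting argument. I would define $Q$ as follows. Measure the first (control) qubit in the computational basis. If the outcome is $0$, measure each constraint $H_a$ in succession on the middle $n$ qubits (the order is immaterial since the $H_a$ commute, and by hypothesis each can be measured with zero error), and accept if and only if the resulting string is $h=0^A$, i.e.\ the register is frustration free. If the control outcome is $1$, measure each $H_a$ on both the second and third $n$-qubit registers to obtain strings $h$ and $h'$; reject unless $h=h'$, and when $h=h'$ apply the SWAP test to the two registers, accepting if and only if it outputs ``Anti-symmetric''. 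All steps are exact ideal measurements realised unitarily, so $Q$ is a bona fide quantum verification procedure.

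Next I would check, branch by branch, that each listed state is accepted by $Q$ with probability exactly $1$ or $0$: the control-$0$ frustration-free states and the control-$1$ antisymmetric states are accepted with certainty, whereas every spanning state of $\HH^0(x)$ is rejected with certainty (control-$0$ with $h\neq 0^A$; control-$1$ with $h\neq h'$; the diagonal states $\vert 1\rangle\vert\psi_{hj}\rangle\vert\psi_{hj}\rangle$; and the control-$1$ symmetric combinations). A short dimension count confirms that these states form a basis of $\HH_{2n+1}$: on the two $n$-qubit registers they are obtained from the product basis $\{\vert\psi_{hj}\rangle\vert\psi_{h'j'}\rangle\}$ (of dimension $2^{2n}$) by an orthogonal change of basis within each degenerate sector $h=h'$, $j\neq j'$, and the control qubit doubles this. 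Since $Q$ is a composition of projective measurements and each listed state is a simultaneous eigenstate of all the associated projectors (the control $Z$, the eigenvalue-string projectors on both registers, and the SWAP projector), no interference cross-terms survive, so this basis is an eigenbasis in the sense of Equation~\eqref{Eq:eigenbasis}. Its spectrum is $\{0,1\}$; hence $Q$ is gapped with $a=1$, $b=0$ and $\HH^{\geq 1}_Q(x)=\HH^1(x)$, $\HH^{\leq 0}_Q(x)=\HH^0(x)$.

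The crux, and the step I expect to be the main obstacle, is \emph{totality}: showing $\HH^1(x)\neq\emptyset$ for every $x$. Here I would use critically that the number of constraints equals the number of qubits, $A=n$. Each eigenvalue string lies in $\{0,1\}^n$, so $\vert G\vert\leq 2^n$, while $\sum_{h\in G}\vert J_h\vert=\vert F\vert=2^n$ because $\mathcal{B}$ is a basis of $\HH_n$. If some string is degenerate, $\vert J_h\vert\geq 2$, then a control-$1$ antisymmetric state exists and lies in $\HH^1(x)$. Otherwise $\vert J_h\vert=1$ for all $h\in G$, forcing $\vert G\vert=2^n$, so every string in $\{0,1\}^n$ occurs, in particular $0^A$; then a frustration-free state exists and the corresponding control-$0$ state lies in $\HH^1(x)$. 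In either case $\HH^1(x)$ is non-empty, so $Q$ is a $1$-total procedure and $(\HH^1(x),\HH^0(x))\in\TFgapQMA(1,0)$.
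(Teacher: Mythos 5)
Your proposal is correct and follows essentially the same route as the paper's own proof: the identical verification procedure (control-qubit branch, exact measurement of the commuting $H_a$'s, SWAP test on the degenerate branch), the same observation that the listed states form an eigenbasis with spectrum $\{0,1\}$, and the same pigeonhole argument for totality (your case split on degeneracy versus the paper's split on $\vert G\vert = 2^n$ or $\vert G\vert < 2^n$ are the same argument stated in reverse order). The extra detail you supply on the dimension count and the absence of interference cross-terms is exactly what the paper leaves as ``straightforward to check.''
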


\begin{proof}
We will show 
that there is a quantum verification procedure $Q$ that accepts with probability $1$ on the states in $\HH^1(x)$ and accepts with probability $0$ on the states in $\HH^0(x)$. Furthermore we will show that $\HH^1(x)$ is non empty for all $x$. This then implies that $Q$ is a $(1,0)$ gapped total quantum verification procedure, and $(\HH^1(x),\HH^0(x)) \in \TFgapQMA (1,0)$.

{\bf Quantum verification procedure.}

We first describe $Q$.

Measure the first qubit, obtaining outcome $b$.

If $b=0$, 
measure all the $H_a$ on the first $n$ qubits, let the result be $h$. Accept if
 $h=0^A$, and otherwise reject. 
 
If $b=1$, measure all the $H_a$ both on the first $n$ qubits and on the second $n$ qubits, yielding two $n$ bit 
eigenvalue sequences
$h=(h_1,...,h_n)$ and $h'=(h'_1,...,h'_n)$. Reject if $h\neq h'$. 

If $h=h'$ carry out a SWAP test. Reject if the SWAP test outputs "Symmetric", accept if the SWAP test outputs 
"Antisymmetric".

{\bf $Q$ is a gapped $(1,0)$-procedure.}

Note that the states enumerated in Equations~\eqref{EQ:HH1X} and \eqref{EQ:HH0X} form a basis of the space of $2n+1$ qubits. Consequently we have $\HH_{2n+1}=\SPAN(\HH^1(x) \cup \HH^0(x))$.

It is straightforward to check that $Q$ leaves all the states enumerated in Equations~\eqref{EQ:HH1X} and \eqref{EQ:HH0X} invariant, and accepts with probability $1$ on the states in  Equation~\eqref{EQ:HH1X}, and rejects with probability $1$ on the states in  Equation~\eqref{EQ:HH0X}. Therefore $Q$ is a gapped $(1,0)$-procedure, and the states enumerated in Equations~\eqref{EQ:HH1X} and \eqref{EQ:HH0X} form an eigenbasis of $Q$ for $x$.

{\bf $Q$ a total procedure.}

To prove that $Q$ is total, we show that $\HH^1(x)$ is non empty for all $x$.

To this end, for every basis vector $\vert \psi_{hj} \rangle \in {\cal B}$  we consider its associated eigenvalue sequence 
$h=(h_1,...,h_n) \in G$. 
The basis $\cal B$ comprises $2^n$ states. The number of associated eigenvalue sequences $\vert G \vert$ is less or equal than $2^n$. 
Therefore, by the pigeonhole principle, either $\vert G \vert =2^n$ and there is a one to one mapping between basis states and bits strings, 
in which case  there is a basis state $ \vert \psi_{hj} \rangle$ with 
eigenvalue sequence $h=0^A$; or $\vert G \vert < 2^n$ and there is 
 at least one collision, i.e. at least two basis states with the same eigenvalue sequence.

In the first case  a  witness is provided by the frustration free state 
  \begin{equation}
  \vert 0\rangle \vert \psi_{0^Aj} \rangle \vert \psi'\rangle\ ,
  \label{Eq:QkSAT1}
\end{equation}
where $\vert \psi'\rangle$ is any state of $n$ qubits.

 In the second case there exists a state of the form
 \begin{equation} 
\frac{\vert 1\rangle}{\sqrt{2}}\left( \vert \psi_{hj} \rangle  \vert \psi_{hj'} \rangle - 
\vert \psi_{hj'} \rangle \vert \psi_{hj} \rangle\right),
\label{Eq:37}
\end{equation}
where $\vert \psi_{hj}  \rangle$ and $\vert \psi_{hj'}  \rangle$ are different basis vectors with the same eigenvalue sequence.
This state is also accepted by probability 1 by $Q$.

Hence $\HH^1(x)$ is non empty for all $x$.
\end{proof}

  Note that if one or more of the constraints $H_a=I_n$ is the identity operator, then there is no frustration free state, and the witnesses are necessarily of the form given in  Equation~\eqref{Eq:37}.

Note that the existence argument in the proof of Theorem \ref{Thm:QkSAT} 
is based on the pigeonhole principle, and therefore the problem \emph{frustration free or degenerate eigenspace of commuting quantum $k$-SAT with $n$ constraints}
has a form very similar to the problems
in the  Polynomial Pigeonhole Principle (PPP) class introduced in \cite{P94}. 
It has the following classical analog: given a $k$-SAT formula with $n$ variables and $n$ clauses, either find a satisfying assignment, or find two assignments such that the clauses all have the same value.

\subsubsection{Almost degenerate states of commuting $k$- Hamiltonian.}

\begin{deff} 
\label{Prob:klocalH}
{\bf Almost degenerate eigenspace of commuting $k$-local Hamiltonian.}
Denote by $x$ the classical description of a
 commuting $k$-local Hamiltonian acting on the space of $n$ qubits $\HH_n$, $H=\sum_{a=1}^A H_a$, with the local terms   bounded by $0\leq H_a \leq I_n/A $,
 and where by hypothesis 
  each $k$-local term can be measured with zero error in polynomial time using a quantum computer, and where by hypothesis the eigenvalues of the $k$-local terms can be efficiently computed, and efficiently added and subtracted with zero error.
 
 Denote by $\HH^1(x)$  the subspace of the space of $2n$ qubits $\HH_{2n}$ spanned by 
the antisymmetric linear combination of  two orthogonal eigenstates, $\vert \psi_{h^1 j^1}\rangle$ and $\vert \psi_{h^2 j2}\rangle$, $(h^1,j^1)\neq (h^2,j^2)$, with almost identical energies
$\vert E^1 - E^2 \vert \leq 2^{-n}$, where
$E^1=\sum_a h^1_a$ and $E^2=\sum_a h^2_a$:
\begin{eqnarray}
\label{EQ:HDeg1X}
\HH^1(x)&=&\SPAN ( \nonumber\\
& &
\{  
\frac{1}{\sqrt{2}}\left( \vert \psi_{h^1j^1} \rangle  \vert \psi_{h^2j^2} \rangle - 
\vert \psi_{h^2j^2}\rangle \vert \psi_{h^1j^1} \rangle\right)\nonumber\\
& &\ :\   (h^{1,2},j^{1,2}) \in F,
(h^1,j^1)\neq (h^2,j^2)
, \nonumber\\
& &\quad
\vert E^1 - E^2 \vert \leq 2^{-n}
\}
\nonumber\\
& )&
\end{eqnarray} 
and denote by $\HH^0(x)$ the orthogonal subspace:
\begin{eqnarray}
\label{EQ:HDeg0X}
\HH^0(x)&=& \SPAN ( \nonumber\\
& &
\{
\vert \psi_{hj} \rangle  \vert \psi_{hj} \rangle
\vert   (h,j) \in F \}
\nonumber\\
& \cup & 
\{
\frac{1}{\sqrt{2}}\left( \vert \psi_{h^1j^1} \rangle  \vert \psi_{h^2j^2} \rangle + 
\vert \psi_{h^2j^2}\rangle \vert \psi_{h^1j^1} \rangle\right)\nonumber\\
& & \ :\   (h^{1,2},j^{1,2}) \in F, (h^1,j^1)\neq (h^2,j^2)
\}\nonumber\\
& \cup &
\{
\frac{1}{\sqrt{2}}\left( \vert \psi_{h^1j^1} \rangle  \vert \psi_{h^2j^2} \rangle - 
\vert \psi_{h^2j^2}\rangle \vert \psi_{h^1j^1} \rangle\right)\nonumber\\
& & \ :\   
 (h^{1,2},j^{1,2}) \in F,
(h^1,j^1)\neq (h^2,j^2)
,\nonumber\\
& & 
\quad 
\vert E^1 - E^2 \vert > 2^{-n}
\}
\nonumber\\
& )&
\end{eqnarray} 
\end{deff}

\begin{thm} The pair of subspaces $(\HH^1(x),\HH^0(x))$ defined in Definition$~\ref{Prob:klocalH}$ belong to $(1,0)$-total functional gap $\QMA$: 
\begin{equation}
(\HH^1(x),\HH^0(x)) \in \TFgapQMA (1,0)\ .
\end{equation}
\end{thm}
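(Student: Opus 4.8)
The plan is to follow exactly the template of the proof of Theorem~\ref{Thm:QkSAT}: I would exhibit a quantum verification procedure $Q$ acting on the $2n$-qubit witness space that accepts with probability exactly $1$ on every state of $\HH^1(x)$ and with probability exactly $0$ on every state of $\HH^0(x)$, then check that the states listed in Equations~\eqref{EQ:HDeg1X} and \eqref{EQ:HDeg0X} form an eigenbasis of $Q$ spanning $\HH_{2n}$ (so that $Q$ is gapped $(1,0)$), and finally prove that $\HH^1(x)$ is never empty (totality). The procedure I would use is: coherently measure all the local terms $H_a$ on the first and on the second $n$-qubit register, obtaining eigenvalue sequences $h^{(1)},h^{(2)}$; coherently compute the energies $E^{(1)}=\sum_a h^{(1)}_a$ and $E^{(2)}=\sum_a h^{(2)}_a$ and the single bit $d=[\,|E^{(1)}-E^{(2)}|\le 2^{-n}\,]$ into an ancilla, which by the zero-error arithmetic hypothesis in Definition~\ref{Prob:klocalH} can be done without error; then run a SWAP test on the two registers and accept if and only if $d=1$ and the SWAP test returns ``Antisymmetric''.

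The step I expect to be the main obstacle, and the essential new difficulty relative to Theorem~\ref{Thm:QkSAT}, is that the two eigenstates $\vert\psi_{h^1j^1}\rangle,\vert\psi_{h^2j^2}\rangle$ forming an $\HH^1(x)$ witness have in general \emph{different} eigenvalue sequences $h^1\neq h^2$ (only their energies are forced to be close). Hence the $H_a$-measurement entangles the data registers with the measurement ancillas, which would destroy the coherence that the SWAP test relies on. I would resolve this in one of two equivalent ways: either (i) after computing $d$, uncompute the $H_a$-measurements by running their defining unitaries backwards, restoring the ancillas to $\vert 0\rangle$ and the data registers to the clean (anti)symmetric combination before applying the SWAP test; or (ii) include the measurement ancillas in the controlled-SWAP, so that the test acts on the enlarged registers $\vert\psi_{h^ij^i}\rangle\vert h^i\rangle$, which are orthonormal whenever $(h^1,j^1)\neq(h^2,j^2)$ and inherit the same (anti)symmetry. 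In either case one must verify that the closeness predicate is symmetric under exchange of the two registers, so that $d$ takes a definite value on both branches of the superposition; this holds because $|E^{(1)}-E^{(2)}|$ is symmetric.

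Granting this, checking that $Q$ is a gapped $(1,0)$-procedure is routine. The diagonal states $\vert\psi_{hj}\rangle\vert\psi_{hj}\rangle$ and the symmetric combinations fail the SWAP test and are rejected with probability $1$; the antisymmetric combinations with $|E^1-E^2|>2^{-n}$ are rejected because $d=0$; and the antisymmetric combinations with $|E^1-E^2|\le 2^{-n}$ pass both checks and are accepted with probability $1$. These states form an orthonormal basis of $\HH_{2n}$, each is left invariant by $Q$ with acceptance probability $0$ or $1$, so they constitute an eigenbasis of $Q$ and $\HH_{2n}=\SPAN(\HH^1(x)\cup\HH^0(x))$.

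For totality I would argue by pigeonhole on the energies. The bound $0\le H_a\le I/A$ forces every eigenstate energy $E=\sum_a h_a$ to lie in $[0,1]$, while the common eigenbasis $\mathcal{B}$ contains exactly $2^n$ states. Sorting their energies, the $2^n-1$ consecutive gaps sum to at most $1$, so some pair of distinct eigenstates has energy difference at most $1/(2^n-1)$ (and the difference is $0$ if any two energies coincide). The associated antisymmetric combination then lies in $\HH^1(x)$, so $\HH^1(x)$ is non-empty for every $x$, and $Q$ is a total gapped $(1,0)$-procedure, giving $(\HH^1(x),\HH^0(x))\in\TFgapQMA(1,0)$. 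I note that the quantity the pigeonhole controls is the inverse-exponential gap $1/(2^n-1)$, so the threshold in Definition~\ref{Prob:klocalH} should be read at this inverse-exponential scale.
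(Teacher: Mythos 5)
Your proof is correct and uses the same ingredients as the paper's --- a SWAP test, coherent measurement of the $H_a$'s with exact energy arithmetic, and a pigeonhole argument for totality --- but it differs in two instructive ways. First, you reverse the order of operations: you measure the $H_a$'s before the SWAP test, which creates the coherence problem you identify (the measurement ancillas entangle with the data registers) and forces the uncompute-or-enlarge-the-SWAP fix; both of your fixes work, since the predicate $\vert E^{(1)}-E^{(2)}\vert \le 2^{-n}$ is exchange-symmetric and hence constant on the two branches of each (anti)symmetric basis state. The paper performs the SWAP test \emph{first} and only then measures the energies, which sidesteps the issue entirely: once the SWAP test has been applied, the accept/reject decision is a classical, exchange-symmetric function of the measurement outcomes, so no interference between the two branches is ever needed and no uncomputation is required. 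Second, your totality argument is actually more careful than the paper's: sorting the $2^n$ energies in $[0,1]$ only guarantees a pair at distance at most $1/(2^n-1)$, which is strictly \emph{larger} than the threshold $2^{-n}$ appearing in Definition~\ref{Prob:klocalH}; the paper simply asserts that pigeonhole yields a pair within $2^{-n}$, which is not literally true as a counting statement (a configuration of $2^n$ equally spaced energies with gap $1/(2^n-1)$ defeats it, and the paper's argument uses no structure of $H$ beyond the count). As you note, the threshold in the definition should be read at the scale $1/(2^n-1)$ (or, say, $2^{-n+1}$) for $\HH^1(x)$ to be provably non-empty; with that harmless adjustment both your proof and the paper's are complete.
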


\begin{proof}

{\bf Quantum verification procedure.}

We first describe $Q$.

Carry out a SWAP test. Reject if the SWAP test outputs "Symmetric".

If the SWAP test outputs 
"Antisymmetric",
measure all the $H_a$ on the first $n$ qubits and on the second $n$ qubits to obtain the eigenvalues $h^1=(h^1_1, h^1_2,...,h^1_A)$ and $h^2=(h^2_1, h^2_2,...,h^2_A)$. Compute  the energies $E^1=\sum_a h^1_a$ and $E^2 =\sum_a h^2_a$. 
Reject if $\vert  E^1 -   E^2\vert >  2^{-n}$ 
and accept if $\vert E^1 -  E^2\vert \leq  2^{-n}$. (Recall that according to our hypothesis, the difference of energies can be computed exactly using an efficient classical algorithm.)

{\bf $Q$ is a gapped $(1,0)$-procedure.}

Note that the states enumerated in Equations~\eqref{EQ:HDeg1X} and \eqref{EQ:HDeg0X} form a basis of the space of $2n$ qubits. Consequently we have $\HH_{2n}=\SPAN(\HH^1(x) \cup \HH^0(x))$.

It is straighforward to check that $Q$ leaves all the states enumerated in Equations~\eqref{EQ:HDeg1X} and \eqref{EQ:HDeg0X}  invariant, and accepts with probability $1$ on the states in  Equation~\eqref{EQ:HDeg1X}, and rejects with probability $1$ on the states in  Equation~\eqref{EQ:HDeg0X}. Therefore $Q$ is a gapped $(1,0)$-procedure, and the states enumerated in Equations~\eqref{EQ:HDeg1X} and \eqref{EQ:HDeg0X} form an eigenbasis of $Q$ for $x$.

{\bf $Q$ is a total procedure.}

To prove that $Q$ is total, we show that $\HH^1(x)$ is non empty for all $x$.

Since there are $2^n$ states $\vert \psi_{hj}\rangle$ and their energies lie in the interval $[0,1]$, 
by the pigeonhole principle, there are at least two different states,
$\vert \psi_{hj}\rangle$ and $\vert \psi_{h'j'}\rangle$ with $(h,j)\neq (h',j')$,
such that the corresponding
energies  differ by at most $ 2^{-n}$. The quantum verification procedure therefore accepts with probability $1$ on the antisymmetric linear combination of these states.

\end{proof}

\subsubsection{Multiple copies of eigenstates of commuting $k$-local Hamiltonian.}\label{MultCopies}

The quantum no--cloning principle suggests another type of problem, namely producing several copies of a state that has certain properties. In order to translate this requirement into a quantum verification procedure it must be possible to verify these properties efficiently. 
We illustrate this in the  case of a commuting $k$-local Hamiltonian $H=\sum_{a=1}^A H_a$. The required property is that the states be joint eigenstates of all the $H_a$'s with the same eigenvalues.

Note that creating a single  joint eigenstate of the $H_a$'s (with random eigenvalues) is easy: take the completely mixed state (half of a maximally entangled state) and measure all the $H_a$ operators on the state. 
To create two identical copies, we can try the following procedure: start with the maximally entangled state $\vert \phi^+\rangle =2^{-n/2} \sum_{i=0}^{2^n-1} \vert i\rangle_1 \vert i\rangle_2$ (which can be efficiently produced). Now measure the $H_a$'s on the first system. Denote by $h=(h_1,...,h_n)$ the measured eigenvalues. If the corresponding eigenspace is one-dimensional, the state after the measurement is $\vert \psi_h\rangle_1 \vert \psi_h^*\rangle_2$, where $ \vert \psi^*\rangle$ denotes the complex conjugate of the state 
$\vert \psi \rangle$  in the standard basis. (If the corresponding eigenspace is degenerate with degeneracy $J_h$, the state after the measurement is $J_h^{-1/2}\sum_{j=1}^{J_h} \vert \psi_{hj}\rangle_1 \vert \psi_{hj}^*\rangle_2$ 
where $\{ \vert \psi_{hj}\rangle ; j=1,..., J_h\}$ is an orthonormal basis of the eigenspace with eigenvalues $h$). Thus if the $H_a$'s are real in the standard basis, we can efficiently create two identical eigenstates. But we do not know an efficient procedure to create two identical eigenstates when the $H_a$'s are complex, nor do we know of an efficient procedure to create three identical eigenstates when the $H_a$ are real.

These remarks lead to the following problem:

\begin{deff} \label{Prob:MCopies}
{\bf Multiple copies of eigenstates of commuting $k$-local Hamiltonian.}
Denote by $x$ the classical description of a
 commuting $k$-local Hamiltonian acting on the space of $n$ qubits $\HH_n$, $H=\sum_{a=1}^A H_a$,
 and where by hypothesis 
  each $k$-local term can be measured with zero error in polynomial time using a quantum computer.
 
 Denote by $\HH^1(x)$  the subspace of the space of $3n$ qubits $\HH_{3n}$ spanned by 
the products of states with the same eigenvalues $h$:
\begin{eqnarray}
\label{EQ:HCopies1X}
\HH^1(x)&=&\SPAN (
\nonumber\\
& &
\{
\vert \psi_{h j} \rangle  \vert \psi_{h j'} \rangle  \vert \psi_{h j''} \rangle\nonumber\\
& &\ :\    h \in G,
 j,j',j''\in J_h
 \}
\nonumber\\
& &)
\end{eqnarray} 
and denote by $\HH^0(x)$ the orthogonal subspace:
\begin{eqnarray}
\label{EQ:HCopies0X}
\HH^0(x)&=& \SPAN ( \nonumber\\
& &
\{ \vert \psi_{hj} \rangle  \vert \psi_{h'j'} \rangle\vert \psi_{h'''j''} \rangle
\nonumber\\
& &\quad 
\ :\    (h,j), (h',j')  ,(h'',j'') \in F,
\nonumber\\
& &\quad  h\neq h' {\rm\ OR\ } h' \neq h'' {\rm\ OR\ }   h'' \neq h\}
\nonumber\\
& &)\ .
\end{eqnarray} 
\end{deff}

(For definiteness we have considered the case where we request $3$ copies of the eigenstates. The case where the $H_a$ are complex and we request $2$ copies can be treated in the same way).

\begin{thm} The pair of subspaces $(\HH^1(x),\HH^0(x))$ defined in Definition$~\ref{Prob:MCopies}$ belong to $(1,0)$-total functional gap $\QMA$: 
\begin{equation}
(\HH^1(x),\HH^0(x)) \in \TFgapQMA (1,0)\ .
\end{equation}
\end{thm}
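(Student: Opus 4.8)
The plan is to reuse the three-part template of the two preceding proofs: exhibit a quantum verification procedure $Q$, verify that it is a gapped $(1,0)$-procedure whose eigenbasis is precisely the product basis enumerated in Equations \eqref{EQ:HCopies1X} and \eqref{EQ:HCopies0X}, and finally establish totality.

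For the verification procedure I would take the following $Q$. Regarding the input state of $\HH_{3n}$ as three blocks of $n$ qubits, I would measure all the commuting operators $H_a$ on each of the three blocks (each measurement is exact and realized as a unitary onto an ancilla, exactly as in the commuting quantum $k$-SAT problem of Definition \ref{Prob:QkSAT}), obtaining three eigenvalue sequences $h^1$, $h^2$, $h^3$. The procedure accepts if $h^1=h^2=h^3$ and rejects otherwise.

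Checking that $Q$ is a gapped $(1,0)$-procedure should be routine. Because $\{\vert \psi_{hj}\rangle\}$ is a basis of $\HH_n$, the product states $\vert \psi_{hj}\rangle \vert \psi_{h'j'}\rangle \vert \psi_{h''j''}\rangle$ form a basis of $\HH_{3n}$, and this basis is exactly the union of the states enumerated in \eqref{EQ:HCopies1X} and \eqref{EQ:HCopies0X}, partitioned according to whether the three eigenvalue sequences all coincide. This gives $\HH_{3n}=\SPAN(\HH^1(x)\cup\HH^0(x))$. Each such product state is a joint eigenstate of every $H_a$ on each block, so the measurements are deterministic and leave it invariant; $Q$ accepts with probability $1$ exactly when $h=h'=h''$ and with probability $0$ otherwise. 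Hence these product states form an eigenbasis of $Q$ for $x$, and $Q$ is gapped $(1,0)$.

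Totality is the one point I would emphasize as different from the two previous theorems: here it is immediate and requires no pigeonhole argument. For any basis vector $\vert \psi_{hj}\rangle$, the three identical copies $\vert \psi_{hj}\rangle \vert \psi_{hj}\rangle \vert \psi_{hj}\rangle$ trivially satisfy $h=h=h$ and so lie in $\HH^1(x)$, proving that $\HH^1(x)$ is non-empty for all $x$. Thus $Q$ is a gapped $(1,0)$ total procedure and $(\HH^1(x),\HH^0(x))\in\TFgapQMA(1,0)$. I do not expect any genuine obstacle in the argument: the interest of this problem lies not in the existence of a witness but in the apparent hardness of \emph{producing} one efficiently, as discussed via the no-cloning considerations preceding Definition \ref{Prob:MCopies}; that is what is expected to separate it from $\FBQP$, but it plays no role in establishing membership in $\TFgapQMA(1,0)$.
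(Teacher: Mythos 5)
Your proposal is correct and follows essentially the same route as the paper: the identical verification procedure (measure all $H_a$ on each of the three $n$-qubit blocks and accept iff the three eigenvalue sequences coincide), the same observation that the product eigenstates enumerated in Equations~\eqref{EQ:HCopies1X} and \eqref{EQ:HCopies0X} form an eigenbasis on which $Q$ acts deterministically, giving a gapped $(1,0)$-procedure, and the same trivial totality argument (the paper likewise notes that non-emptiness of $\HH^1(x)$ is immediate since $\{\vert\psi_{hj}\rangle\}$ is a basis of $\HH_n$, with no pigeonhole needed). Your closing remark that the interest lies in the apparent hardness of producing a witness, not in its existence, matches the paper's discussion as well.
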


\begin{proof}

{\bf Quantum verification procedure.}

We first describe $Q$.

Measure all the $H_a$ on qubits $1,...,n$, on qubits $n+1,...,2n$, and on qubits $2n+1,...,3n$.
Accept if the outcomes $h=(h_1,...,h_A)$ are equal. Otherwise reject.

{\bf $Q$ is a gapped $(1,0)$-procedure.}

Note that the states enumerated in Equations~\eqref{EQ:HCopies1X} and \eqref{EQ:HCopies0X} form a basis of the space of $3n$ qubits. Consequently we have $\HH_{3n}=\SPAN( \HH^1(x) \cup \HH^0(x))$.

It is straighforward to check that $Q$ leaves all the states enumerated in Equations~\eqref{EQ:HCopies1X} and \eqref{EQ:HCopies0X}  invariant, and accepts with probability $1$ on the states in  Equation~\eqref{EQ:HCopies1X}, and rejects with probability $1$ on the states in  Equation~\eqref{EQ:HCopies0X}. Therefore $Q$ is a gapped $(1,0)$-procedure, and the states enumerated in Equations~\eqref{EQ:HDeg1X} and \eqref{EQ:HDeg0X} form an eigenbasis of $Q$ for $x$.

{\bf $Q$ a total procedure.}

Since $\{\vert \psi_{hj} \rangle  \}$ is a basis of $\HH_{n}$, 
 $\HH^1(x)$ is non empty for all $x$. In fact 
 ${\mbox{\rmfamily\rm{Dim}}}\left( \HH^1(x)\right) \geq 2^n$.

 \end{proof}

\subsection{Quantum Lov\'asz Local Lemma}

The Quantum Lov\'asz Local Lemma ($\QLLL$) introduced in \cite{AKS12} provides conditions under which the quantum 
$k$-SAT problem is satisfiable. 
The satisfiability conditions were extended in 
\cite{SMLM16} and \cite{HLSZ19}.

As an example  we give the following result taken from \cite{AKS12}:
Let $\{\Pi_1, . . . ,\Pi_m\}$ be a $k$-QSAT instance where all projectors have rank at most $r$. If every
qubit appears in at most $D = 2^k/(e\cdot r \cdot k)$ 
projectors, then the problem is satisfiable. For our purposes
we will call the hypothesis of this statement the $\QLLL$ condition.

A Constructive Quantum Lov\'asz Local Lemma provides conditions under which the frustration free state can be efficiently constructed by a quantum algorithm, i.e. is in $\FBQP$. Initial results used commutativity of the constraints \cite{SCV13,SA15}. This condition was dropped in \cite{GS16} which provides a constructive algorithm under a uniform gap constraint defined as follows:
let $\epsilon=1/q(n)$ for some polynomial $q(n)$, then
for any subset $S$ of the constraints the gap of $H_S=\sum_{i\in S} \Pi_i$ is greater than $\epsilon$, where the gap is the difference between the two smallest eigenvalues of $H_S$. 
Note that there is no known  efficient quantum algorithm that can check whether the uniform gap constraint is satisfied.

It is not known how the constructive algorithm of \cite{GS16} works when the uniform gap condition does not hold. It may be that it always outputs a state close to the ground state. It may also be that it sometimes outputs a state far from the ground state. If the latter is true, then this gives rise to an interesting problem in $\TFQMA$.

\begin{deff}
{\bf Quantum verification procedure for ground state energy under QLLL conditions.}
\label{procedureQLLL}
Denote by $x$ the classical description $\{\Pi_1, . . . ,\Pi_m\}$ of a $k$-QSAT instance satisfying the QLLL condition.
Denote by $H=m^{-1}\sum_{i=1}^m \Pi_i$ the Hamiltonian obtained by summing all the projectors, rescaled to have eigenvalues in the interval $[0,1]$.
 
Denote by $Q_{QLLL}$ the following quantum verification procedure:

On input 
{$(x, \vert \psi \rangle)$},
apply to $\vert \psi \rangle$  the  phase estimation algorithm \cite{S94,K95,CEMM98} for the unitary operator $U= \exp(i \pi H)$  to
$\ell(n)$ bits of precision, where $\ell \in \poly$.

In order to implement the unitary operator $U=\exp(i \pi H)$ and its powers in the phase estimation algorithm, use the algorithm of \cite{B17} that efficiently realises Hamiltonian simulation with exponentially small error.  Fix the error of the Hamiltonian simulation so that the  error made during  the phase estimation algorithm is at most $1/h(n)$, where $h\in\poly$ . 

Denote by $\tilde \phi$ the estimated phase.

Accept if $\tilde \phi =0$. Otherwise reject.

\end{deff}

Recall that on an eigenstate of $U$, $U\vert \psi_\phi \rangle = e^{i 2 \pi \phi} \vert \psi_\phi \rangle$, the phase estimation algorithm yields an $l$ bit approximation of $\phi\in [0,1)$. If the eigenphase $\phi$ is a multiple of $2^{-l}$ (i.e. if $\phi$ can be written exactly in binary using $l$ bits), then the phase estimation algorithm will yield the exact value of $\phi$ with probability $1$.

Note that in Definition$~\ref{procedureQLLL}$  we take $U=\exp(i \pi H)$ so that the eigenphases $\phi$ of $U$ lie in the interval $[0,1/2)$. This ensures that we cannot mistake the large eigenvalues with the small ones (which would be the case if we had taken $U=\exp(i 2\pi H)$).
Further details on the phase estimation algorithm can be found in  \cite{CEMM98}, see also the discussion in Section$~\ref{subsec:QFT}$.

\begin{thm}
 The quantum verification procedure $Q_{QLLL}$ described in  
Definition$~\ref{procedureQLLL}$ is a $(1-\frac{1}{h})$--Total Quantum Verification Procedure.
Furthermore the ground states of $H$ belong to
$R_{Q_{QLLL}}^{\geq 1-1/h}(x)$; and 
 all eigenstates of $H$  with energy $E\geq 2^{-\ell+1}$ belong to $R_{Q_{QLLL}}^{\leq 1/2+1/h}(x)$.
\label{Thm:QLLL2}
\end{thm}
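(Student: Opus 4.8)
The plan is to establish the three claims in sequence: (i) that $Q_{QLLL}$ is a total procedure with completeness bound $1-1/h$, by producing a witness whose acceptance probability is at least $1-1/h$ for every $x$; (ii) that the ground states lie in $R_{Q_{QLLL}}^{\geq 1-1/h}(x)$; and (iii) that every eigenstate with energy $E\geq 2^{-\ell+1}$ lies in $R_{Q_{QLLL}}^{\leq 1/2+1/h}(x)$. The key observation throughout is that the eigenstates of $H$ are exactly the eigenstates $\vert\psi_\phi\rangle$ of $U=\exp(i\pi H)$, with eigenphase $\phi = E/2 \in [0,1/2[$, so phase estimation applied to an eigenstate of $H$ behaves cleanly.

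First I would treat the totality claim, which follows from the $\QLLL$ condition. By the Quantum Lov\'asz Local Lemma recalled before Definition~\ref{procedureQLLL}, the $\QLLL$ condition guarantees that the $k$-QSAT instance is satisfiable, i.e. there exists a frustration-free state $\vert\psi_0\rangle$ annihilated by every $\Pi_i$, hence $H\vert\psi_0\rangle=0$ and $\phi=0$ for this eigenstate. I would then argue that running phase estimation on $\vert\psi_0\rangle$ yields the estimate $\tilde\phi=0$ exactly, because $\phi=0$ is an exact $\ell$-bit binary value and, in the idealized phase estimation, an eigenphase expressible in $\ell$ bits is recovered with probability $1$. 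The only deviation is the Hamiltonian-simulation error, which by construction contributes at most $1/h(n)$ to the total error of the procedure; hence the acceptance probability on $\vert\psi_0\rangle$ is at least $1-1/h$. Since such a $\vert\psi_0\rangle$ exists for every $x$ satisfying the $\QLLL$ condition, $R_{Q_{QLLL}}^{\geq 1-1/h}(x)$ is non-empty and the procedure is $(1-1/h)$-total. This argument also immediately gives claim (ii), since any ground state of $H$ has energy $E=0$ (the instance being frustration-free) and therefore $\phi=0$, so the same reasoning places it in $R_{Q_{QLLL}}^{\geq 1-1/h}(x)$.

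For claim (iii), I would analyze an eigenstate $\vert\psi_\phi\rangle$ of $H$ with energy $E\geq 2^{-\ell+1}$, corresponding to eigenphase $\phi=E/2 \geq 2^{-\ell}$. The point is that such a $\phi$ is bounded away from $0$ by at least one unit $2^{-\ell}$ in the $\ell$-bit grid, so in exact phase estimation the probability of obtaining the outcome $\tilde\phi=0$ is small. I would invoke the standard tail bound for phase estimation: for a true phase at distance at least $2^{-\ell}$ from a grid point, the probability of landing on that grid point is at most a constant (of order $1/2$ or smaller from the $\mathrm{sinc}$-type concentration of the QFT), and then fold in the simulation error $1/h$ to conclude that the acceptance probability on $\vert\psi_\phi\rangle$ is at most $1/2+1/h$, placing it in $R_{Q_{QLLL}}^{\leq 1/2+1/h}(x)$.

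The main obstacle I anticipate is the careful accounting in claim (iii): one must control the probability that idealized phase estimation on a phase $\phi\geq 2^{-\ell}$ nonetheless reports $\tilde\phi=0$, which requires the precise $\mathrm{sinc}^2$ / Dirichlet-kernel estimate for the discrete QFT rather than a hand-wave, and then one must combine this idealized bound with the $1/h$ Hamiltonian-simulation error without the two error contributions interacting badly. A secondary subtlety is making precise the claim that a phase that is an \emph{exact} $\ell$-bit number (namely $\phi=0$) is recovered with probability $1$ in the error-free model, and that the only loss comes from the simulation step; this hinges on the hypothesis in Definition~\ref{procedureQLLL} that the simulation error is fixed so the overall procedure error is at most $1/h$, which I would cite directly rather than re-derive.
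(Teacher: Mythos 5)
Your proposal is correct and follows essentially the same route as the paper's proof: invoke the QLLL condition to obtain a frustration-free (energy-$0$) eigenstate, note that phase estimation recovers the exactly-representable eigenphase $\phi=0$ with probability $1$ in the error-free model, fold in the $1/h$ Hamiltonian-simulation error for the completeness bounds, and use the standard phase-estimation tail bound of Cleve--Ekert--Macchiavello--Mosca for eigenstates with $E\geq 2^{-\ell+1}$. The paper handles your anticipated obstacle in claim (iii) exactly as you suggest at the end, by citing the CEMM analysis for the bound $\Pr[|\tilde\phi - E/2| > 2^{-\ell}] < 1/2$ rather than re-deriving the Dirichlet-kernel estimate.
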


\begin{proof}

 Denote by $\vert \psi_{Ej}\rangle$ the eigenstates of $H$ with energy $E$: $H
 \vert \psi_{Ej}\rangle = E \vert \psi_{Ej}\rangle$, where $j\in J_E$ labels orthogonal energy eigenstates with the same energy $E$.
 Recall that $0\leq E \leq 1$.
 By the $\QLLL$ conditions, the Hamiltonian has at least one frustration free state, i.e. a state with energy $0$. We denote these ground states $\vert \psi_{0j}\rangle$, $j\in J_0$.

First let us neglect the error made in the Hamiltonian simulation. 

The phase estimation algorithm acting on state $\vert \psi_{Ej}\rangle$ will output $\tilde \phi$, which is an $\ell$ bit estimate of  $E/2$.

Recall that if $E/2$ is an integer multiple of $2^{-\ell}$,
then $\tilde \phi = E/2$ with probability $1$.
As a consequence the ground states $\vert \psi_{0j}\rangle$ will {be accepted} with probability $1$.

Taking into account the error in the Hamiltonian simulation, the probability that the quantum verification procedure accepts on  
 $\vert \psi_{0j}\rangle$ is at least $1-\frac{1}{h}$. Hence   procedure $Q_{QLLL}$
is a $(1-\frac{1}{h})$--total quantum verification procedure and the ground states $\vert \psi_{0j}\rangle$ belongs to
$R_{Q}^{\geq 1-1/h}(x)$.

Let us now consider the probability that the procedure $Q_{QLLL}$ accepts on the other eigenstates $\vert \psi_{Ej}\rangle$. Once again we first neglect the error made in the Hamiltonian simulation.
 It follows from the analysis of  
\cite{CEMM98}  that the probability that $\tilde \phi$ differs from $E/2$ by more than $2^{-\ell}$ is less than $1/2$. Hence all eigenstates with energy $E\geq 2^{-\ell+1}$ will accept with probability less or equal than $1/2$.

Taking into account the error in the Hamiltonian simulation, the probability that the quantum verification procedure accepts on  
 $\vert \psi_{Ej}\rangle$ with $E\geq 2^{-\ell+1}$  is at most $1/2+1/h$. 
 \end{proof}
 
Note that there may exist  eigenstates with energy $2^{-\ell+1}\geq E >0$. We do not know what is the acceptance probability of $Q_{QLLL}$ on these eigenstates.

Note that the 
procedure of Theorem \ref{Thm:FQMASR}  allows us to change the bounds $1-1/h$ and $1/2+1/h$ that appear in the statement of Theorem \ref{Thm:QLLL2}, for instance to $2/3$ and $1/3$. However a detailed  analysis is complicated by the fact that we do not know the  eigenbasis of $Q_{QLLL}$. (If the Hamiltonian simulation did not induce any error, then the eigenbasis of $Q_{QLLL}$ would consist of the energy eigenstates $\vert \psi_{Ej}\rangle$. The error in the Hamiltonian simulation modifies the eigenbasis slightly.)

Note that the  classical analogue of the  problem 
based on the quantum verification procedure $Q_{QLLL}$ 
is in $\FBPP$ (the functional analog of $\BPP$), as there exist efficient randomized classical algorithms to find a satisfying assignment when the Lov\'asz Local Lemma conditions are satisfied\cite{M09,MT10}.

\subsection{Quantum money based on knots.}

Public key quantum money was introduced in \cite{A09}.
 Here we show how the scheme of \cite{FGHLS12} in which the quantum money consists of coherent superposition of (representations of) knots induces a problem in $\TFQMA$.

We first recall that any knot can be represented by a grid diagram $G$.
We denote by $D(G)$ the size of the grid diagram.
Any grid diagram $G$   can be encoded by two disjoint permutations $\Pi_X$ and $\Pi_O$ of $D(G)$ elements. 
We denote by 
\begin{equation}
\vert G\rangle =\vert D(G),\Pi_X,\Pi_O\rangle
\label{Eq:Gform}
\end{equation}
 a quantum encoding of such a grid diagram. The one-variate Alexander polynomial  $A(G)$ 
can be efficiently computed from the representation $G$ of a knot \cite{A28}.

In \cite{FGHLS12} it is proposed that the following states, labeled by grid diagrams $G$, can be used as quantum money
\begin{equation}
\vert \$_{G}\rangle =\sum_{
\substack{G' : ~ 
2\leq D(G') \leq 2D(G), \\
 ~A(G')=A(G)}}
  \frac{\sqrt{q(D(G'))}}{\sqrt{N} }\ 
\vert G' \rangle,
\label{QMoneyKnot}
\end{equation}
where $D(G)$ is the dimension of the grid diagram $G$; $A(G)$ is the   Alexander polynomial of the corresponding knot;  
the superposition is over grid diagrams $G'$ of dimension between $2$ and $2D$ with the same Alexander polynomial $A(G')=A(G)$;
$N$ is a normalisation factor;
 $q(d')$ is the following quasi--Gaussian distribution over 
grid diagram dimensions between $2$ and $2D$: 
$q(d') = \left \lceil{y(d')/y_{\min}}\right \rceil$, where $y(d')= 
\frac{1}{d'! \left[ \frac{d'!}{e}\right]}  \exp \left(-(d'-D )^2/ 2 D\right)$, for $2 \leq d' \leq 2D$,
with $y_{\min}$  the minimum value of $y(d')$ for $2 \leq d' \leq 2D$, and where
for a positive real number $x$ we denote by $\lceil x \rceil$  the smallest integer which is at least $x$, and we set $[x] = \lceil x - 1/2 \rceil$.

Note that one does not know of an efficient procedure to check if a polynomial is an Alexander polynomial associated to a knot, nor of an efficient algorithm which, given an Alexander polynomial, finds the associated knot. For this reason the input to the following procedure is a grid diagram $G$ and a quantum state.

\begin{deff}\label{Prob:Knot}
{\bf Quantum verification procedure for quantum money based on knots.}
Denote by $Q_\$ $  the quantum verification procedure described in \cite{FGHLS12}, which for completeness we recall briefly. 

On input $(G,\vert \phi \rangle)$ carry out the following steps:
\begin{enumerate}
\item 
Verify that $\vert \phi \rangle$ is a superposition of basis vectors that validly encode grid
diagrams, i.e. that it has the form Eq. \eqref{Eq:Gform}.
If this is the case then move on to step 2, otherwise reject.
\item
Measure the Alexander polynomial on $\vert \phi \rangle$. If this is measured
to be $A(G)$ then continue on to step 3. Otherwise, reject.
\item
Measure the projector onto grid diagrams with dimensions in the range $\left[D(G)/2,3D(G)/2\right]$.  If you obtain +1 then continue  to step 4. Otherwise, reject. 
\item
Apply the Markov chain verification algorithm described in \cite{FGHLS12}.
If $\vert \phi \rangle$ passes this step, accept. Otherwise, reject. 

This is the crucial step that checks that the state is a coherent superposition of knots which can be mapped one into the other by elementary grid moves, that is elementary moves that map a knot onto an equivalent knot.

\end{enumerate}
\end{deff}

\begin{thm}
The quantum verification procedure $Q_{\$}$ described in  Definition$~\ref{Prob:Knot}$ is a $(1-C\exp(-D(G)/2))$--Total Quantum Verification Procedure, for some positive constant $C$.
Furthermore the  states Equation~\eqref{QMoneyKnot} belong to
$R_{Q_\$}^{\geq 1-C\exp(-D(G)/2)}(G)$.
\label{Thm:Knot}
\end{thm}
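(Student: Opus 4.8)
The plan is to prove both assertions at once by exhibiting a single explicit witness, namely the money state $\vert \$_{G}\rangle$ of Equation~\eqref{QMoneyKnot}, and showing that $Q_{\$}$ accepts it with probability at least $1-C\exp(-D(G)/2)$. This is exactly the ``furthermore'' claim, that $\vert \$_{G}\rangle \in R_{Q_{\$}}^{\geq 1-C\exp(-D(G)/2)}(G)$. Totality then comes for free: the superposition defining $\vert \$_{G}\rangle$ is non-empty for every $G$ with $D(G)\geq 2$ (it already contains the term $\vert G\rangle$, since $A(G)=A(G)$ and $2\leq D(G)\leq 2D(G)$), so $R_{Q_{\$}}^{\geq 1-C\exp(-D(G)/2)}(G)$ is non-empty for all inputs and $Q_{\$}$ is an $a$--total procedure with $a=1-C\exp(-D(G)/2)$. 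Thus everything reduces to one acceptance-probability estimate for the fixed state $\vert \$_{G}\rangle$.

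I would then bound the failure probability stage by stage along the four steps of Definition~\ref{Prob:Knot}. By construction $\vert \$_{G}\rangle$ is a superposition only of valid grid-diagram encodings, so step 1 accepts with probability $1$; and every $G'$ in the superposition satisfies $A(G')=A(G)$, so the Alexander-polynomial measurement of step 2 returns $A(G)$ with certainty and leaves the state unchanged. The first genuine source of error is step 3: the superposition ranges over $2\leq D(G')\leq 2D(G)$, whereas step 3 projects onto the narrower dimension window $[D(G)/2,\,3D(G)/2]$. Here I would use the quasi-Gaussian profile $q$ of Equation~\eqref{QMoneyKnot}: the dimension $D(G')$ measured on $\vert \$_{G}\rangle$ is distributed essentially as a Gaussian centered at $D(G)$, so the weight falling outside the window is a Gaussian tail, exponentially small in $D(G)$, giving a bound of the form $C\exp(-D(G)/2)$ as in \cite{FGHLS12}.

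It remains to control step 4, the Markov-chain verification, which is the technical heart of the argument and the step I expect to be the main obstacle. The key point, established in \cite{FGHLS12}, is that the quasi-Gaussian weighting is chosen precisely so that $\vert \$_{G}\rangle$ is, up to the exponentially small truncation to the allowed dimension range, the invariant coherent superposition of the grid-move Markov chain associated with $G$; consequently the projection implemented in step 4 accepts it with probability $1-O(\exp(-D(G)/2))$. I would not reprove the mixing and spectral properties of the grid-move chain but import them directly from \cite{FGHLS12}; the only addition here is the observation that these bounds, together with the union bound over the error contributions of steps 3 and 4 (steps 1 and 2 being error-free), give $\Pr[Q_{\$}(G,\vert \$_{G}\rangle)=1]\geq 1-C\exp(-D(G)/2)$ for a suitable constant $C$, which is precisely what places this problem in the $\TFQMA$ framework.
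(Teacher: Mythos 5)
Your overall strategy matches the paper's: take $\vert \$_G\rangle$ itself as the witness, observe that steps 1 and 2 accept it with certainty, bound the step-3 truncation error by a Gaussian tail, and conclude totality from the fact that $\vert \$_G\rangle$ exists for every $G$ (so $R_{Q_\$}^{\geq a}(G)$ is never empty). The gap is in how you combine the errors of steps 3 and 4. You treat them as independent failure events and invoke a classical union bound, but steps 3 and 4 are \emph{sequential quantum measurements}: the projection in step 3 disturbs the state, so the probability that step 4 accepts the post-measurement state is not controlled by ``step 4's failure on the ideal state plus step 3's failure''. This disturbance is exactly what the paper's proof is built around: it writes the unnormalised post-step-3 state as $(1-\delta)\vert \$_G\rangle + \vert \$_G^\perp\rangle$ with $\langle \$_G^\perp\vert \$_G^\perp\rangle = \delta(1-\delta)$, uses the fact (imported from \cite{FGHLS12}) that step 4 accepts $\vert \$_G\rangle$ with probability exactly $1$, and concludes that step 4 accepts the \emph{distorted} state with probability at least $1-O(\sqrt{\delta})$ --- note the square root, which is the generic loss when a measurement with failure probability $\delta$ perturbs the state by $O(\sqrt{\delta})$ in norm.

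Concretely, if $\Pi_3,\Pi_4$ denote the accepting projectors of steps 3 and 4, with $\Pi_4\vert\$_G\rangle = \vert\$_G\rangle$ and $\|\Pi_3\vert\$_G\rangle\|^2 = 1-\delta$, then the correct elementary estimate is $\|\Pi_4\Pi_3\vert\$_G\rangle\| \geq 1 - \|(I-\Pi_3)\vert\$_G\rangle\| = 1-\sqrt{\delta}$, so the joint acceptance probability is at least $1-2\sqrt{\delta}$; your additive bookkeeping silently assumes the stronger $1-O(\delta)$ bound without justification. The conclusion you want is still true and your argument is salvageable, either by the paper's explicit decomposition (accepting the $\sqrt{\delta}$ loss and adjusting the constant in the exponent), or by invoking a genuine quantum union bound for sequential projective measurements (e.g.\ Gao's bound, which does yield an additive $O(\sum_i\delta_i)$ error); but as written, the step you yourself flag as ``the technical heart'' is precisely the one left unjustified. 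A secondary discrepancy: you assert that step 4 accepts the ideal state $\vert\$_G\rangle$ only with probability $1-O(\exp(-D(G)/2))$, whereas the paper uses acceptance with probability exactly $1$ on the ideal state, and it is that exactness which makes its $1-O(\sqrt{\delta})$ estimate go through cleanly.
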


\begin{proof}
Consider the action of $Q_{\$}$  on input $(G, \vert \$_{G}\rangle )$.
Steps 1 and 2 succeed with probability $1$. 
Step 3 succeeds with probability $1-\delta$ where $\delta$ is  approximately given by $\exp(-D(G)/8)$.

Note that the unnormalised state after step 3 can be written

\begin{equation}
(1-\delta)\vert \$_{G}\rangle  + \vert \$_{G}^\perp\rangle
\label{eq:distState}
\end{equation}
where $\vert \$_{G}^\perp\rangle$ is orthogonal to $\vert \$_{G}\rangle $ and has norm 
$\langle \$_{G}^\perp \vert \$_{G}^\perp\rangle = \delta (1- \delta)$.

Given as input a state of the form
$\vert \$_{G}\rangle$, step 4 succeeds with probability 1. However because the  state has been distorted at step 3 (see Equation \eqref{eq:distState}), 
on input $(G, \vert \$_{G}\rangle )$ step 4 of $Q_{\$}$
succeeds with slightly reduced probability lower--bounded by $1-O(\sqrt{\delta})$.

Hence   procedure $Q_{\$}$
is a $1-O(\sqrt{\delta})$--total quantum verification procedure and the  state $ \vert \$_{G}\rangle$ belongs to
$R_{Q}^{\geq 1-O(\sqrt{\delta})}(G)$.

Using the inequality $\sqrt{\delta} \leq C\exp(-D(G)/2)$ for some positive constant $C$  provides the statement in the proof.
\end{proof}

It is not known what other states will pass the above quantum verification procedure. It is conjectured, see discussion in \cite{FGHLS12}, that 
quantum computers cannot efficiently produce states that pass the above quantum verification procedure.

\section{Relativized Problems}\label{SEC:Rel}

\subsection{Introduction}

In this section we give problems in which the quantum computer has access to an oracle. The complexity is counted as the complexity of the quantum algorithm, including the number of calls to the oracle which each count as one computational step.

A quantum oracle is  an infinite sequence of unitary transformations $U=\{ U_n\}_{n\geq 1}$.  We assume that each $U_n$ acts on $p(n)$ qubits for some  $p\in \poly$.   We  assume that given an $n$-bit string as input, a quantum algorithm calls only $U_n$, not $U_m$ for any $m\neq n$.
When there is no danger of confusion, we will refer to $U_n$ simply as $U$.

We now describe how one makes a call to the oracle.  Assume a quantum computer’s state has the form
\begin{equation}
\vert \Phi \rangle = \sum_{z}\sum_{b\in \{-1,0,1\}} \alpha_{z,b} \vert z\rangle 
\vert b\rangle \vert \phi_{z,b}\rangle  
\end{equation}
where $\vert z\rangle$ is a basis of the workspace register, $\vert b\rangle$ is a control qutrit with basis $\{\vert -1\rangle, \vert 0\rangle, \vert +1\rangle\}$, and $\vert \phi_{z,b}\rangle$ is a $p(n)$-qubit answer register.  Then to “query $U_n$” means to apply the
following unitary transformation
\begin{equation}
\vert \Phi \rangle 
\to 
 \sum_{z}\sum_{b\in \{-1,0,1\}} \alpha_{z,b} \vert z\rangle 
\vert b\rangle U^{b}\vert \phi_{z,b}\rangle  
 \ , 
\end{equation}
where we have assumed that if we can apply $U$, then we can also apply controlled--$U$  and controlled--$U^{-1}$.

Let $C$ be a quantum complexity class, and let $U=\{ U_n\}_{n\geq 1}$ be a quantum oracle.   Then by $C^U$, we  mean the class of problems solvable by a $C$ machine that, given an input of length $n$, can query $U_n$ at unit cost as many times as it likes.

\subsection{Finding a marked state}

We first give a very simple oracle, which is the basis of Grover's algorithm \cite{G96,G97} with respect to which we have a separation between FBQP and TFQMA. See \cite{BBGV97,AK07} for previous use of this oracle in separating complexity classes.

\begin{oracle}{\bf Marking a state.}
Let $\{\vert \psi_n \rangle\in \HH_n; n\in \mathbb{N}\} $ be  a family of states chosen uniformly at random from the Haar measure.
We denote  by $A=\{A_n\}$ the oracle acting on $n+1$ qubits that marks the $n$ qubits state $\vert \psi_n \rangle\in \HH_n$:
\begin{eqnarray}
A_n\vert a\rangle \vert \psi_n \rangle &=& \vert a \oplus 1\rangle \vert \psi_n \rangle\ ,\nonumber\\
A_n\vert a\rangle \vert \phi \rangle &=& \vert a \rangle \vert \phi \rangle\quad \forall \vert \phi \rangle \perp \vert \psi_n \rangle 
\end{eqnarray}
where $a\in\{0,1\}$. 
\end{oracle}

\begin{deff} \label{Prob:MarkedState}
{\bf Finding a marked state.} Given  oracle $A$ and the corresponding family of states
$\{\vert \psi_n\rangle \} $, denote by $\HH^1(n)$ the space spanned by the state $\vert \psi_n \rangle $ and denote by 
$\HH^0(n)$ the orthogonal space:
\begin{eqnarray}
\label{EQ:MarkedState}
\HH^1(n)&=& \SPAN ( \{ \vert \psi_n \rangle \})\ ,
\nonumber\\
\HH^0(n)&=&
\SPAN ( \{\vert \phi \rangle \ : \ 
\vert \phi \rangle \perp \vert \psi_n \} )\ .
\end{eqnarray} 
\end{deff}

\begin{thm} 
\label{ThmOracleSeparation}
The pair of subspaces $(\HH^1(x),\HH^0(x))$ defined in Definition$~\ref{Prob:MarkedState}$ belong to $(1,0)$-total functional gap $\QMA^A$, but are not in  $\FBQP^A$:
\begin{eqnarray}
(\HH^1(x),\HH^0(x)) &\in& \TFgapQMA^A (1,0)\\
(\HH^1(x),\HH^0(x)) &\notin& \FBQP^A\ .
\end{eqnarray}
\end{thm}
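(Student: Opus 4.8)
The plan is to prove the two assertions separately: membership in $\TFgapQMA^A(1,0)$ and exclusion from $\FBQP^A$. For the membership I would exhibit an explicit $(1,0)$ gapped total quantum verification procedure $Q$ that makes a single query to $A$. On input $(n,\vert\phi\rangle)$, with $\vert\phi\rangle\in\HH_n$ the witness, $Q$ adjoins an output qubit in state $\vert0\rangle$, applies $A_n$ to the pair (output qubit, witness register), and accepts iff the output qubit is measured in state $\vert1\rangle$. By the definition of the oracle, $A_n\vert0\rangle\vert\psi_n\rangle=\vert1\rangle\vert\psi_n\rangle$ while $A_n\vert0\rangle\vert\phi\rangle=\vert0\rangle\vert\phi\rangle$ for every $\vert\phi\rangle\perp\vert\psi_n\rangle$, so $Q$ accepts $\vert\psi_n\rangle$ with probability $1$ and every state orthogonal to $\vert\psi_n\rangle$ with probability $0$. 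Completing $\{\vert\psi_n\rangle\}$ to an orthonormal basis of $\HH_n$ yields an eigenbasis of $Q$ whose spectrum is $\{0,1\}$; hence $Q$ is gapped, $\HH^{\geq1}_Q(n)=\SPAN(\{\vert\psi_n\rangle\})=\HH^1(n)$ and $\HH^{\leq0}_Q(n)=\HH^0(n)$, matching Definition~\ref{Prob:MarkedState}. Totality is immediate, since $\vert\psi_n\rangle$ exists for every $n$, so $\HH^1(n)$ is never empty. This establishes $(\HH^1,\HH^0)\in\TFgapQMA^A(1,0)$.

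For the exclusion, I would first observe that a state $\rho$ lies in $R^{\geq2/3}_Q(n)$ iff $\langle\psi_n\vert\rho\vert\psi_n\rangle\geq2/3$, so preparing a valid $\FBQP^A$ witness amounts to producing a state with large overlap on the Haar-random unknown state $\vert\psi_n\rangle$. I would rule this out by a standard hybrid (Bennett--Bernstein--Brassard--Vazirani type) argument. For a candidate $\BQP^A$ algorithm making $T=\poly(n)$ queries, compare its run with the \emph{empty} run in which every call to $A_n$ is replaced by the identity; the empty run is independent of $\psi_n$. Since $A_n-I$ is supported on the one-dimensional marked subspace, the hybrid bound gives $\big\Vert\,\vert\phi^{\mathrm{real}}\rangle-\vert\phi^{\mathrm{empty}}\rangle\,\big\Vert\le 2\sum_{t=1}^T\big\Vert (I\otimes\vert\psi_n\rangle\langle\psi_n\vert)\vert\tilde\phi_t\rangle\big\Vert$, where $\vert\tilde\phi_t\rangle$ is the empty-run state before the $t$-th query. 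Averaging over the Haar measure and using $\mathbb{E}_{\psi_n}\vert\psi_n\rangle\langle\psi_n\vert=I/2^n$ bounds each term by $2^{-n/2}$, so $\mathbb{E}_{\psi_n}\big\Vert\,\vert\phi^{\mathrm{real}}\rangle-\vert\phi^{\mathrm{empty}}\rangle\,\big\Vert=O(T\,2^{-n/2})$; since the empty run has overlap $\mathbb{E}_{\psi_n}\vert\langle\psi_n\vert\phi^{\mathrm{empty}}\rangle\vert^2=2^{-n}$, the real overlap satisfies $\mathbb{E}_{\psi_n}\vert\langle\psi_n\vert\phi^{\mathrm{real}}\rangle\vert^2=O(T^2\,2^{-n})$.

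Finally I would convert this average-case query bound into a genuine oracle separation. By Markov's inequality the probability over the Haar choice that a fixed $\poly$-time machine prepares an acceptable witness at length $n$ is $O(T^2\,2^{-n})$, which is summable in $n$. Enumerating the countably many polynomial-time oracle machines and applying Borel--Cantelli, for a Haar-random family $\{\vert\psi_n\rangle\}$ almost surely every machine fails at all sufficiently large $n$; fixing one such family gives an oracle $A$ with $(\HH^1,\HH^0)\notin\FBQP^A$. I expect the main obstacle to be this second part: making the hybrid estimate fully rigorous for a continuous Haar-random witness (in particular handling the adaptive, controlled, and inverse queries permitted by the oracle model) and carefully executing the measure-theoretic diagonalization that turns average-case hardness into a single separating oracle. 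The membership claim, by contrast, is routine.
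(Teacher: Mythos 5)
Your proof is correct and follows the same route as the paper's, but with substantially more detail on the hardness side. The membership half is identical to the paper's: the same one-query verification procedure (append an output qubit in $\vert 0\rangle$, apply $A_n$, measure, accept on $1$), the same observation that it accepts $\HH^1(n)$ with probability $1$ and $\HH^0(n)$ with probability $0$, and the same immediate totality argument. For the hardness half the paper is a one-liner: it invokes the known $\Theta(\sqrt{d})$ query lower bound for finding a marked state, citing Bennett--Bernstein--Brassard--Vazirani for the lower bound and Grover for the matching upper bound, and concludes that $\Theta(2^{n/2})$ queries are needed. Your version spells out what that citation compresses: the hybrid argument bounding the deviation of a $T$-query run from the oracle-free run by $O(T\,2^{-n/2})$ on average over the Haar measure, and --- a step the paper does not address at all --- the Markov plus Borel--Cantelli diagonalization over the countably many polynomial-time oracle machines needed to pass from an average-case query bound to a single fixed oracle family $A$ witnessing $(\HH^1,\HH^0)\notin\FBQP^A$. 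This last step is genuinely needed to make the separation rigorous (a query lower bound for a random oracle is not yet a statement about one fixed oracle), so your more careful treatment strengthens rather than deviates from the paper's argument; the technical worries you flag are also easily dispatched, since $A_n$ is an involution (so inverse queries coincide with forward queries) and the bound $\Vert(\mathrm{controlled}\mbox{-}A_n - I)\vert\tilde\phi\rangle\Vert \le 2\,\Vert (I\otimes\vert\psi_n\rangle\langle\psi_n\vert)\vert\tilde\phi\rangle\Vert$ survives the addition of a control register.
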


\begin{proof}
{\bf Quantum verification procedure.}

We first describe the quantum verification procedure $Q$.

The value of the classical input $x$ is irrelevant, only its length $n$ is used. On input $(n, \vert \chi\rangle)$, append to   $\vert \chi\rangle)$ a single qubit in state $\vert 0\rangle$ to obtain the state $\vert 0\rangle\vert \chi\rangle$; act with $A$ on this state; measure the first qubit; accept if the measurement result is $1$ and reject if the measurement result is $0$.

{\bf $Q$ is a gapped $(1,0)$-total procedure.}

It is immediate to show that $Q$  accepts with probability $1$ on the states in $\HH^1(n)$ and accepts with probability $0$ on the states in $\HH^0(n)$. 
Furthermore  $\HH^1(n)$ is non empty for all $x$. Therefore $Q$ is a $(1,0)$ gapped total quantum verification procedure, and $(\HH^1(n),\HH^0(n)) \in \TFgapQMA^A (1,0)$.

{\bf Hardness in $\FBQP^A$.}

It is well known that finding  a marked state in a Hilbert space of dimension $d$  
requires $\Theta(d^{1/2})$ queries to the oracle. The lower bound follows from arguments in \cite{BBGV97}, and the upper bound is given by Grover's algorithm\cite{G96,G97}. Since $d=2^n$ a quantum computer will need  $\Theta(2^{n/2})$ operations to find the marked state.

\end{proof}

\subsection{Group Non--Membership}

Black-box groups, in which group operations are performed by an oracle $B$, were introduced by Babai and Szemer\'edi  in \cite{BS84}. In this model subgroups are given by a list of generators. It was shown in \cite{BS84} that for such subgroups, Group Membership belongs to $\NP^B$, i.e. there exists a succinct classical certificate for membership. Subsequently, by extending the oracle $B$ to the quantum setting,
Watrous \cite{Watrous} showed  that Group Non-Membership is in $\QMA^B$, i.e. there exists a succinct quantum certificate for non-membership. Consequently, as we show below, the general question of Group (Non-)Membership belongs to $\TFQMA^B$.  (Note that \cite{AK07} provides evidence that the certificate for Group Non-Membership could be classical, in which case Group (Non-)Membership would belong to  $\TFCQMA^B$).

\begin{oracle}{\bf Black-box groups.}
We use  Babai and Szemer\'edi's model of black-box groups with unique encoding \cite{BS84}, adapted to the quantum context.
In this model we know how to multiply
and take inverses of  elements of the group, but we don't know anything else about the group.

More precisely, 
let $\{G_n\}$ be a family of groups, with $\vert G_n \vert \leq 2^n$.
Each element $x\in G_n$ is represented by a randomly chosen classical label $ l(x)\in \{0,1\}^n$, to which we associate a quantum state $\vert  l(x)\rangle $ (the label $l(x)$ written in the computational basis).
We denote by $B=\{B_n\}$ the family of  oracles that perform the group operations as follows:

If  the state of the quantum computer is 
\begin{equation}
\vert \psi\rangle = \sum_{x,y\in G_n}\sum_z \psi_{xyz} \vert l(x)\rangle \vert l(y)\rangle \vert z\rangle,
\label{GroupOracle1}
\end{equation}
where $ \vert z\rangle$ is some workspace, then the oracle acts as
\begin{equation}
B_n\vert \psi\rangle = \sum_{x,y\in G_n}\sum_z  \psi_{xyz} \vert l(x)\rangle \vert l(yx^{-1})\rangle \vert z\rangle.
\label{GroupOracle2}
\end{equation}

We suppose that the representation of the unit element $\vert l(e)\rangle$ is known. The oracle can then be used to compute the inverse of an element (by inputting $\vert l(x)\rangle \vert l(e)\rangle$), and group 
multiplication (by first computing $\vert l(x^{-1})\rangle$, and then inputing  $\vert l(x^{-1})\rangle \vert l(y)\rangle$).

In addition we suppose that  the oracle can check that a register contains a valid label. One possibility is that if the inputs  are orthogonal to states of the form Equation~\eqref{GroupOracle1}, i.e. if the first two registers do not contain valid labels, then the oracle returns a standard error signal $\vert \perp \rangle$.

\end{oracle}

For simplicity of notation in the following we use interchangeably the notations $g$ and $l(g)$ for the group elements. The context will make clear which is used.

Fix the index $n$.
Suppose you receive as input the labels $l(g_1), ...,l(g_k)$ and $l(h)$ of group elements $g_1, ...,g_,h \in G_n$. 
Denote by $H=\langle g_1, ...,g_k\rangle $ the subgroup of $G_n$  generated by $g_1, ...,g_k$.
Group (Non-)Membership is the question: does $H$ contain $h$? 

 Babai and Szemer\'edi \cite{BS84} 
 showed that there exists a short classical certificate for $h \in H$, that we denote by $C(g_1, ...,g_k,h)$.
 The certificate is an efficient representation of $h$ as a product of the group elements $g_1, ...,g_k$ and their inverses $g_1^{-1}, ...,g_k^{-1}$, see \cite{BS84} for details.
 
 Watrous showed  \cite{Watrous} that for $h\notin H$ there exists a succinct quantum certificate.
\begin{equation}
\vert \psi_H\rangle =\frac{ 1}{\vert H \vert ^{1/2}}\sum_{x\in H} \vert l(x) \rangle \ .
\label{EqWatrousState}
\end{equation}

\begin{deff} {\bf Quantum verification procedure for group (non-)membership.}\label{Prob:GNM}
Given  oracle $B$, and the corresponding family of groups $\{G_n\}$,
let $x=(n,l(g_1), ...,l(g_k), l(h))$. 
Denote by $Q_{G(N)M}$ the following quantum verification procedure which on input $(x, \vert \psi\rangle)$ acts as:

\begin{enumerate}

\item  Measure the first qubit of the quantum input $\vert \psi\rangle$ in the standard basis. Denote by $\vert \psi'\rangle$ the remaining part of the quantum input.

\item  If the first qubit is $0$, then check whether $\vert \psi'\rangle= \vert C(g_1, ...,g_k,h)\rangle$ is a classical certificate certifying that $h\in H$. Accept if this is  the case, otherwise reject.

\item  if the first qubit is $1$, then on $\vert \psi'\rangle$ carry out  the quantum verification procedure for group non membership described in \cite{Watrous}. Accept or reject accordingly.

\end{enumerate}

\end{deff}

\begin{thm}
\label{Thm:GNM}
Given access to oracle $B$, the quantum verification procedure $Q_{G(N)M}$ described in Definition$~\ref{Prob:GNM}$ is an $ \frac{1}{2}$-total quantum verification procedure.
Furthermore,
\begin{enumerate}
\item If $h\in H$, then 
\begin{equation}
\vert 0 \rangle \vert C(g_1, ...,g_k,h)\rangle\in R^{1}_{Q_{G(N)M}}(x)
\end{equation}
for all valid certificates $C(g_1, ...,g_k,h)$ that $h\in H$;
 and all states with the first qubit set to $1$ reject with high probability:
\begin{equation}
\{\vert 1 \rangle \vert \psi'\rangle 
\}\in 
R^{\leq 2^{-2n}}_{Q_{G(N)M}}(x);
\end{equation}
\item 
 If $h\notin H$, then the state
\begin{equation}
\vert 1 \rangle \vert \psi_H\rangle\in R^{\geq 1/2}_{Q_{G(N)M}}(x),
\end{equation}
  and all states with the first qubit set to $0$ reject with unit  probability:
\begin{equation}
\{\vert 0 \rangle \vert \psi'\rangle\}\in R^{0}_{Q_{G(N)M}}(x).
\end{equation}
\end{enumerate}
\end{thm}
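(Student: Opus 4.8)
The plan is to verify the three claims about $Q_{G(N)M}$ separately: that it is $\tfrac12$-total, that the certificate and Watrous states accept with the stated probabilities, and that the complementary states (wrong first qubit) reject as claimed. The key structural observation is that the first qubit of the witness acts as a classical flag selecting between two independent subroutines: the $b=0$ branch checks a classical membership certificate, and the $b=1$ branch runs Watrous's non-membership verifier. Since exactly one of $h\in H$ or $h\notin H$ holds, in either case at least one branch admits an accepting witness, which gives totality.

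First I would handle totality and the positive (accepting) statements. If $h\in H$, then by Babai--Szemer\'edi~\cite{BS84} a valid classical certificate $C(g_1,\dots,g_k,h)$ exists; step 2 of $Q_{G(N)M}$ checks membership exactly (the certificate is just an efficiently verifiable group-product expression), so $\vert 0\rangle\vert C(g_1,\dots,g_k,h)\rangle$ is accepted with probability $1$, placing it in $R^{1}_{Q_{G(N)M}}(x)$. If $h\notin H$, then by Watrous~\cite{Watrous} the state $\vert\psi_H\rangle$ of Equation~\eqref{EqWatrousState} is accepted by the non-membership verifier with probability at least $1/2$; since we prepend $\vert 1\rangle$ to route into step 3, the full witness $\vert 1\rangle\vert\psi_H\rangle$ lies in $R^{\geq 1/2}_{Q_{G(N)M}}(x)$. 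In both cases a witness accepting with probability at least $1/2$ exists, so $Q_{G(N)M}$ is $\tfrac12$-total.

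Next I would establish the negative (rejecting) statements, which is where the completeness/soundness properties of the two underlying subroutines must be invoked carefully. When $h\in H$, any state with first qubit $1$ is fed to the Watrous non-membership verifier; since $h\in H$, this is a NO instance for non-membership, and soundness of Watrous's procedure guarantees rejection with overwhelming probability. I would note that the stated bound $2^{-2n}$ is obtained by applying the $\QMA$ amplification of Theorem~\ref{Thm-QMA-Amplification} to Watrous's verifier so that its soundness error is pushed below $2^{-2n}$; hence $\{\vert 1\rangle\vert\psi'\rangle\}\subseteq R^{\leq 2^{-2n}}_{Q_{G(N)M}}(x)$. When $h\notin H$, any state with first qubit $0$ is sent to step 2, but no valid classical certificate of membership exists (membership fails), so step 2 rejects with certainty, giving $\{\vert 0\rangle\vert\psi'\rangle\}\subseteq R^{0}_{Q_{G(N)M}}(x)$.

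The main obstacle I anticipate is the $2^{-2n}$ bound in case~1: this is not immediate from Watrous's original verifier, which only has constant soundness, and requires that one has pre-amplified the non-membership subroutine before embedding it in step 3. I would make explicit that such amplification is permissible because it only increases the polynomial circuit size and preserves the structure of the witness space (by Theorem~\ref{Thm:FQMAamplifSR} amplification is a strong reduction). A secondary subtlety is that I should confirm the soundness of Watrous's verifier is genuine \emph{worst-case} soundness over \emph{all} states $\vert\psi'\rangle$, not merely over the honest witness; this is exactly the content of the $(a,b)$-procedure guarantee in Equation~\eqref{QMA2}, so it carries through. The remaining verifications (that step 2 decides classical membership exactly, that the flag qubit measurement cleanly separates the two branches) are routine.
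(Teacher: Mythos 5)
Your overall architecture---a case analysis on the flag qubit, Babai--Szemer\'edi for the $b=0$ branch, Watrous for the $b=1$ branch, and totality because exactly one of $h\in H$, $h\notin H$ holds---is exactly the paper's proof, which is a short case analysis with citations to \cite{BS84} and \cite{Watrous}. The problem is your treatment of the $2^{-2n}$ bound. Your premise that Watrous's verifier ``only has constant soundness'' is backwards: in Watrous's protocol the constant-$1/2$ parameter is the \emph{completeness} (the controlled-multiplication-by-$h$ test accepts the exactly $H$-invariant state $\vert \psi_H\rangle$ with probability exactly $1/2$ when $h\notin H$, and this cannot be raised without altering the protocol), whereas the \emph{soundness} error is already tunable to be exponentially small inside Watrous's procedure itself: the verifier repeats the $H$-invariance test (controlled multiplication by nearly uniform random elements of $H$) polynomially many times on the single witness, each round shrinking the non-invariant component, so that when $h\in H$ every witness is accepted with probability at most $2^{-p(n)}$ for any desired polynomial $p$. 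The paper's proof simply quotes these two facts from \cite{Watrous}; no amplification is invoked.

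The pre-amplification you propose is not merely unnecessary---it opens a genuine gap. The theorem's claims are phrased in terms of the relations $R^{\geq 1/2}_{Q_{G(N)M}}(x)$ and $R^{\leq 2^{-2n}}_{Q_{G(N)M}}(x)$, i.e., acceptance probabilities of \emph{specific states}, and the paper explicitly warns (in the discussion following Definition~\ref{Deff:StrongReduction}) that a strong reduction does \emph{not} preserve these relations: in general $R_{Q}^{\geq a}(x) \neq R_{Q'}^{\geq f(a)}(x)$; only the eigenspace relations $\HH^{\geq a}_Q$, $\HH^{\leq b}_Q$ transform simply under amplification. Concretely, $\vert \psi_H\rangle$ is in general \emph{not} an eigenstate of Watrous's verifier: under the controlled-$h$ test it decomposes into Fourier combinations of the coset states $\vert \psi_{Hh^j}\rangle$, whose eigenstate acceptance probabilities $\sin^2(\pi k/d)$ merely average to $1/2$. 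After Marriott--Watrous amplification the acceptance probability of $\vert \psi_H\rangle$ becomes $\sum_i \vert \alpha_i\vert^2 f(p_i)$, which is not automatically $\geq 1/2$ just because $\sum_i \vert \alpha_i \vert^2 p_i = 1/2$. So as written your argument proves the soundness claim for the \emph{amplified} verifier while justifying the completeness claim $\vert 1\rangle \vert \psi_H\rangle \in R^{\geq 1/2}_{Q_{G(N)M}}(x)$ by citing the \emph{unamplified} one; these are different procedures, and reconciling them would require a new spectral analysis of $\vert \psi_H\rangle$ in the amplified verifier's eigenbasis. The repair is simply to drop the amplification step and use Watrous's verifier with sufficiently many invariance-test rounds, which is what the paper's proof (implicitly, via citation) does.
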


\begin{proof}
If the input has the form $\vert 0\rangle \vert \psi' \rangle$, then the verification procedure is classical, and the probabilities of accepting is $1$ if $h\in H$ and the input is a valid classical certificate, otherwise the probability of accepting is $0$.

If the input has the form $\vert 1\rangle \vert \psi_H \rangle$ and $h\notin H$, then the probability that the quantum verification procedure for group non membership accepts is $1/2$ (see \cite{Watrous}).

If the input has the form $\vert 1\rangle \vert \psi' \rangle$, and $h\in H$, then the probability that the quantum verification procedure for group non membership accepts is upper bounded by $2^{-2n}$ (see \cite{Watrous}).

\end{proof}

\subsection{Problems based on QFT}
\label{subsec:QFT}

We consider here  problems for which the verification procedure is based on the efficiency of the Quantum Fourier Transform and the phase estimation algorithm \cite{S94,K95,CEMM98}. 

The Quantum Fourier Transform is based on a unitary that can be efficiently exponentiated. We will suppose below that this unitary is given by an oracle. 

Unitaries that can be efficiently exponentiated were studied in \cite{AA17} in the context of the time energy uncertainty. The only explicit example we are aware of where $U$ can be efficiently exponentiated but cannot be efficiently diagonalised is when $U$ is the time evolution of a commuting $k$-local Hamiltonian: $U=\exp (iH)$ with $H=\sum_a H_a$, where $H_a$ is $k$-local and the $H_a$ all commute. Therefore the problems below also apply in the case where the input $x$ is the classical description of such a commuting $k$-local Hamiltonian, and $U=\exp (iH)$.
If additional classes of unitaries that can be efficiently exponentiated but cannot be efficiently diagonalized are discovered, then this provides new $\TFQMA$ problems, which justifies using the present oracle based formulation.

\begin{oracle}{\bf Efficient exponentiation of unitaries.}
Let $\{ U_n \ :\  n\in \mathbb{N}\} $ be  a family of unitary matrices acting on $n$ qubits chosen uniformly at random from the Haar measure.
We denote  by $C=\{C_n\}$ the oracle which implements the transformations $U_n$ and their powers as follows:
\begin{equation}
C_n \Big(  \vert k \rangle \vert \psi \rangle \vert \varphi \rangle\Big) 
= \vert k \rangle \Big(U_n^k \vert \psi \rangle \Big) \vert \varphi \rangle
\end{equation}
where $ \vert k \rangle$ is a classical register of $n$ bits, with $k\in\{0,...,2^n-1\}$, $ \vert \psi \rangle$ is a state of $n$ qubits, and $\vert \varphi \rangle$ is some workspace. 
\end{oracle}

We denote by $\phi\in[0,1)$ and $\vert \psi_{\phi\alpha}\rangle \in \HH_n$ the eigenphases and eigenstates of $U_n$: 
\begin{eqnarray}
U_n \vert \psi_{\phi\alpha} \rangle &=& e^{i2 \pi \phi} \vert \psi_{\phi\alpha} \rangle\nonumber,\\
\langle \psi_{\phi'\alpha'}\vert \psi_{\phi\alpha}\rangle&=&\delta_{\alpha' \alpha}\delta_{\phi' \phi},
\label{phialpha}
\end{eqnarray}
where $\alpha\in \N$ labels orthogonal states with the same eigenvalue.
(For simplicity of notation, we do not add an index $n$ to the states $\vert \psi_{\phi\alpha}\rangle$: it will be obvious from the context what size Hilbert space they belong to).

We denote by $S(n)$ the set of  couples $(\phi,\alpha)$ that satisfy Equation~\eqref{phialpha}:
\begin{equation}
S(n)=\{(\phi,\alpha) :  U_n \vert \psi_{\phi\alpha} \rangle = e^{i2 \pi \phi} \vert \psi_{\phi\alpha} \rangle \}
\end{equation}
and
we denote by 
$S^{2\, ,\,dis}(n)$ the set of distinct couples $((\phi,\alpha), (\phi',\alpha'))$:
\begin{eqnarray}
S^{2\, ,\,dis}(n)&=&\{
(\phi,\alpha, \phi',\alpha')\in S(n) \times S(n)
\nonumber\\
& &\ :
(\phi < \phi') \rm{\ OR \ }
(\phi = \phi' \rm{\ AND \ } \alpha < \alpha')\}\ .\nonumber
\end{eqnarray}

We denote by $\HH^{sym}(n)$ the symmetric space
\begin{eqnarray}
\HH^{sym}(n) &=& \SPAN ( \nonumber\\
& & \{  \vert \psi_{\phi \alpha} \rangle\vert \psi_{\phi \alpha} \rangle
: (\phi,\alpha) \in S  \}
\nonumber\\
&  \cup & \{ 
\frac{1}{\sqrt{2}} \left( \vert \psi_{\phi \alpha} \rangle \vert \psi_{\phi'\alpha'} \rangle
+ \vert \psi_{\phi' \alpha'} \rangle \vert \psi_{\phi\alpha} \rangle \right)  \nonumber\\
& &
: (\phi,\alpha,\phi',\alpha') \in  S^{2\, ,\,dis}\} \nonumber\\
& & )\ ,
\end{eqnarray}
and by $\HH^{anti}(n)$ the antisymmetric space
\begin{eqnarray}
\HH^{anti}(n) &=& \SPAN ( \nonumber\\
&  & \{ 
\vert \psi^A_{\phi \alpha\phi' \alpha'}\rangle 
: (\phi,\alpha,\phi',\alpha') \in  S^{2\, ,\,dis} \}
\nonumber\\  & & ) \ ,
\end{eqnarray}
with
\begin{eqnarray}
\vert \psi^A_{\phi \alpha\phi' \alpha'}\rangle &=&
\frac{ \vert \psi_{\phi \alpha} \rangle \vert \psi_{\phi'\alpha'} \rangle
- \vert \psi_{\phi' \alpha'} \rangle \vert \psi_{\phi\alpha} \rangle }{\sqrt{2}}
\label{eq:PsiUAnti}
\end{eqnarray}
the antisymmetric states.

We denote by $2\pi \dd(\phi, \phi')$ is the distance on the unit circle  between the angles $2\pi \phi$ and $2\pi\phi'$:
\begin{equation}
\dd(\phi, \phi')= 
\min \{\vert \phi-\phi'\vert, 1 - \vert \phi -\phi' \vert\}\ .
\end{equation}

The following problem is a generalisation of the problem based on Definition \ref{Prob:klocalH}
 to the case where the input is an oracle implementing unitary transformations, rather than by a commuting $k$-local Hamiltonian.

\begin{deff} \label{Prob:AlmostDegU}
{\bf Almost Degenerate Eigenspace of $U$.}
Given oracle $C$, 
denote by
$S^{\leq 2^{-n}}\subseteq S\times S$ the set of neighbouring eigenvalues of $U_n$, and by $\HH^{\leq 2^{-n}}$ the corresponding subspace of $\HH^{anti}(n)$: 
\begin{eqnarray}
S^{\leq 2^{-n}}&=& \{ ( (\phi,\alpha), (\phi',\alpha') )\in S\times S \nonumber\\
& &  : \dd(\phi,\phi')\leq 2^{-n}, 
(\phi',\alpha')\neq (\phi,\alpha) \} \nonumber\\
\HH^{\leq 2^{-n}} &=&  \SPAN ( \nonumber\\
&  & \{ 
\vert \psi^A_{\phi \alpha\phi' \alpha'}\rangle 
: ((\phi,\alpha), (\phi',\alpha')) \in  S^{\leq 2^{-n}} \} \nonumber\\
& & )\ ;
\end{eqnarray}
and denote by $S^{>9/2^{n}}\subseteq S\times S$ the set of non-neighbouring eigenvalues of $U_n$, and by $\HH^{>9/2^{n}}$ the corresponding subspace of $\HH^{anti}(n)$: 
\begin{eqnarray}
S^{>9/2^{n}}&=& \{ ( (\phi,\alpha), (\phi',\alpha') )\in S\times S \nonumber\\
& &  :
\dd(\phi,\phi')> 9/ 2^{n}, 
(\phi',\alpha')\neq (\phi,\alpha) \} \nonumber\\
\HH^{>9/2^{n}} &=&  \SPAN ( \nonumber\\
&  & \{ 
\vert \psi^A_{\phi \alpha\phi' \alpha'}\rangle
: ((\phi,\alpha), (\phi',\alpha')) \in S^{>9/2^{n}}\} \nonumber\\
& & )\ .
\end{eqnarray}
\end{deff}

\begin{thm}
\label{Thm:AlmostDegU}
Given access to oracle $C$, and given $n$, there exists a $2/3$-total quantum verification procedure $Q$ acting on $2n$ qubits such that the corresponding pair of relations 
$(\HH^{\geq 2/3\, ,\,C}_Q(n, \vert \psi \rangle), \HH^{\leq 1/3\, ,\,C}_Q(n, \vert \psi \rangle))\in \TFQMA^C(2/3,1/3)$ 
satisfy
\begin{eqnarray}
\HH^{\geq 2/3\, ,\,C}_Q(n, \vert \psi \rangle)&\supseteq &
\HH^{\leq 2^{-n}} \ ,\nonumber\\
\HH^{\leq 1/3\, ,\,C}_Q(n, \vert \psi \rangle))&\supseteq &
\SPAN ( \HH^{sym}(n) , \HH^{>9/2^{n}}  )\ .\nonumber\\
&&
\end{eqnarray}
\end{thm}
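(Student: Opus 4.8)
The plan is to build $Q$ in direct analogy with the verification procedure for the almost degenerate eigenspace of a commuting $k$-local Hamiltonian (Definition~\ref{Prob:klocalH} and the theorem following it), replacing the exact measurement of the commuting local terms by phase estimation applied to the oracle $C$ that implements $U_n$ and its powers. Concretely, on a $2n$-qubit witness viewed as two $n$-qubit registers, $Q$ first performs a SWAP test on the two registers and rejects immediately if the outcome is ``Symmetric''; if the outcome is ``Antisymmetric'' it runs phase estimation to $\ell=\ell(n)$ bits of precision on register~$1$ and on register~$2$ (each using the controlled powers $U_n^k$ supplied by $C$), obtaining estimates $\tilde\phi_1,\tilde\phi_2$, and it accepts iff the circular distance $\dd(\tilde\phi_1,\tilde\phi_2)$ is at most a threshold $\tau=\tau(n)$. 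All the SWAP-test, phase-estimation and decision registers live in the ancilla space $\HH_k$, so the witness space is $\HH_{2n}$ as required.

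First I would pin down the eigenbasis of $Q$. I claim that the symmetric and antisymmetric pair states $\{\vert\psi_{\phi\alpha}\rangle\vert\psi_{\phi\alpha}\rangle\}$ and $\{\tfrac1{\sqrt2}(\vert\psi_{\phi\alpha}\rangle\vert\psi_{\phi'\alpha'}\rangle\pm\vert\psi_{\phi'\alpha'}\rangle\vert\psi_{\phi\alpha}\rangle)\}$ (cf.\ Eq.~\eqref{eq:PsiUAnti}) form an eigenbasis of $Q$ in the sense of Theorem~\ref{Thm:BlockStructure}. The SWAP test is an ideal (QND) measurement onto the orthogonal symmetric and antisymmetric subspaces, so there is no interference across the two and every symmetric pair state is rejected outright, with acceptance probability $0$. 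For an antisymmetric state $\vert\psi^A_{\phi\alpha\phi'\alpha'}\rangle$, phase estimation writes the two estimates into the ancilla while leaving the eigenstate labels in the registers untouched; since $\dd$ is symmetric, a short computation gives acceptance probability $P(\phi,\phi')=\Pr[\dd(\tilde\phi,\tilde\phi')\le\tau]$, where $\tilde\phi,\tilde\phi'$ are independent $\ell$-bit estimates of $\phi,\phi'$. Crucially, two distinct antisymmetric states carry distinct \emph{ordered} pairs of eigenstate labels in registers~$1$ and~$2$; as distinct eigenstates of $U_n$ are orthogonal, the corresponding output states are orthogonal on these registers and all cross terms in the acceptance probability of a superposition vanish. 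This establishes Eq.~\eqref{Eq:eigenbasis}, so the pair states are an eigenbasis and, by Definition~\ref{Deff:eigenbasis_spectrum_qvp}, the spectrum of $Q$ is $\{0\}\cup\{P(\phi,\phi')\}$.

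With the eigenbasis in hand, the containments reduce to a phase-estimation accuracy calculation. I would choose $\ell=n+O(1)$ and $\tau\in(2^{-n},9/2^{n})$ (e.g.\ $\tau\approx 5/2^{n}$) and invoke the tail bounds of \cite{CEMM98}: when $\dd(\phi,\phi')\le 2^{-n}$ both estimates concentrate within the chosen resolution with enough probability that $P(\phi,\phi')\ge 2/3$, placing every such $\vert\psi^A_{\phi\alpha\phi'\alpha'}\rangle$ in an eigenspace of acceptance probability $\ge 2/3$ and hence $\HH^{\le 2^{-n}}\subseteq\HH_Q^{\ge 2/3}(n)$; when $\dd(\phi,\phi')> 9/2^{n}$ the true separation is large enough that, even after the two estimation errors, $\dd(\tilde\phi,\tilde\phi')>\tau$ with probability $\ge 2/3$, giving $P(\phi,\phi')\le 1/3$ and hence $\HH^{>9/2^{n}}\subseteq\HH_Q^{\le 1/3}(n)$; together with the symmetric states (acceptance $0$) this yields $\SPAN(\HH^{sym}(n),\HH^{>9/2^{n}})\subseteq\HH_Q^{\le 1/3}(n)$. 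Totality follows from the pigeonhole principle exactly as in the proof following Definition~\ref{Prob:klocalH}: $U_n$ has $2^n$ orthonormal eigenstates whose eigenphases lie on the unit circle, so either two of them share an eigenphase or the $2^n$ distinct phases have minimal circular gap at most $2^{-n}$; either way there is a pair with $\dd(\phi,\phi')\le 2^{-n}$, so $\HH^{\le 2^{-n}}$, and therefore $\HH_Q^{\ge 2/3}(n)$, is non-empty for every $n$.

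The main obstacle is the accuracy analysis in the last step rather than the structural part. Unlike the commuting-Hamiltonian problem of Definition~\ref{Prob:klocalH}, where the local terms are measured and their eigenvalues added with \emph{zero} error, the eigenphases of a Haar-random $U_n$ are generic reals and phase estimation is only approximate; this is precisely why a buffer region $2^{-n}<\dd(\phi,\phi')\le 9/2^{n}$ must be excluded (the antisymmetric states there have intermediate acceptance probabilities and belong to neither subspace) and why only the inclusions $\supseteq$, not equalities, are asserted. The delicate point is to fix $\ell$ and $\tau$ together with the constant $9$ so that the two-sided $\ell$-bit error bound of \cite{CEMM98} simultaneously forces $P\ge 2/3$ on one side and $P\le 1/3$ on the other; should these precise constants prove awkward, one may instead establish the inclusions for some fixed bounds $a'>b'$ and then invoke Theorem~\ref{Thm:FQMASR} to move them to $2/3$ and $1/3$.
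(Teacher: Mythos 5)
Your proposal is correct and follows essentially the same route as the paper's own proof: the identical verification procedure (SWAP test rejecting symmetric states, then phase estimation on both registers via the oracle's controlled powers, accepting iff the circular distance of the two estimates is below a threshold of order $5/2^n$), the same eigenbasis argument based on the absence of interference between distinct symmetric/antisymmetric pair states, the same accuracy analysis combining the tail bound of \cite{CEMM98} with the triangle inequality, and the same pigeonhole argument for totality. The only (harmless) additions are your explicit flexibility in the precision $\ell$ and threshold $\tau$ and the fallback to Theorem~\ref{Thm:FQMASR}, which the paper does not need to invoke.
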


\begin{proof}

{\bf Quantum verification procedure.} 
We first describe $Q$, which we view as acting on two $n$ qubit states.

Step 1:  Carry out a SWAP test on the two $n$ qubit states. Reject if the SWAP test outputs "Symmetric"; proceed to Step 2 if the SWAP test outputs "Antisymmetric".

Step 2: Carry out the phase estimation algorithm on both  $n$ qubits states to $n$ bits of precision, obtaining two estimates $\hat \phi$ and $\hat \phi'$. Reject if $\dd(\hat \phi, \hat \phi') > 5 / 2^n$, otherwise accept.

{\bf Eigenbasis of $Q$.}

First note that the SWAP test leaves symmetric and antisymmetric spaces $\HH^{sym}(n)$ and $\HH^{anti}(n)$ invariant. Therefore the symmetric states, which all accept with probability $0$, constitute part of the eigenbasis of $Q$.

Second, recall that after phase estimation an eigenstate  of  $U_n$ is not modified, but the ancilla contains a superposition of estimates of the phase
\begin{equation}
\vert \psi_{\phi \alpha} \rangle\otimes \vert 0^n\rangle
\to
\vert \psi_{\phi \alpha} \rangle\otimes  \sum_{\hat \phi} c_{\phi \hat \phi}
\vert \hat \phi \rangle
\end{equation}
where $\hat \phi$ are  the $n$ bit estimates of the phase.
The probability of state $\vert \psi_{\phi \alpha} \rangle$ yielding estimate $\hat \phi$ is therefore
\begin{equation}
\Pr\left[\hat \phi \vert  \psi_{\phi \alpha}  \right] = \vert c_{\phi \hat \phi} \vert^2\ .
\end{equation} 
As a consequence, the probability that Step 2, acting on a linear superposition of antisymmetric states
\begin{equation}
\vert \psi \rangle = 
\sum_{(\phi,\alpha,\phi',\alpha') \in  S^{2\ dis}}
\gamma_{\phi \alpha \phi' \alpha'} \vert \psi^A_{\phi \alpha\phi' \alpha'}\rangle 
\label{eq:psiAnti}
\end{equation}
yields estimates $(\hat \phi, \hat \phi')$
 is
\begin{eqnarray}
&\Pr \left[\hat \phi, \hat \phi'\vert  \psi \right] =\quad \quad \quad &\nonumber\\
& 
\sum_{(\phi,\alpha,\phi',\alpha') \in  S^{2\, ,\,dis}}
\vert \gamma_{\phi \alpha \phi' \alpha'}\vert^2
\frac{  \vert  c_{\phi \hat \phi} \vert^2
\vert  c_{\phi' \hat \phi'} \vert^2
+
\vert  c_{\phi' \hat \phi} \vert^2
\vert  c_{\phi \hat \phi'} \vert^2
}{2}
\ .&
\nonumber\\
\label{eq:ProbaPsiAnti}
\end{eqnarray}
Since there are no interferences between the different antisymmetric states in the superposition,  the antisymmetric states are the other part of the eigenbasis of $Q$, see Theorem \ref{Thm:BlockStructure}.

{\bf Acceptance and rejection probability of antisymmetric states.}

Recall \cite{CEMM98}  that the phase estimation algorithm with $n$ bit of precision acting on an eigenstate $\vert \psi_{\phi \alpha} \rangle$ yields an estimated phase  with error bounded by
\begin{equation}\label{Eq:PhaseEst}
\Pr \left[ \dd( \phi ,\hat \phi) > \frac{k}{2^{n}}\right] < \frac{1}{2k-1}\ .
\end{equation}

For an antisymmetric state $\vert \psi^A_{\phi \alpha\phi' \alpha'}\rangle $ the quantum verification procedure $Q$ will yield two estimates for the phases $\hat \phi$ and $\hat \phi'$ with probability
\begin{eqnarray}
\Pr \left[\hat \phi, \hat \phi'\vert \psi^A_{\phi \alpha\phi' \alpha'} \right] &=&
\frac{  \vert  c_{\phi \hat \phi} \vert^2
\vert  c_{\phi' \hat \phi'} \vert^2
+
\vert  c_{\phi' \hat \phi} \vert^2
\vert  c_{\phi \hat \phi'} \vert^2
}{2}\nonumber\\
&=&
\frac{1}{2}
\left(
\Pr(\hat \phi \vert  \psi_{\phi \alpha}  ) \Pr(\hat \phi' \vert  \psi_{\phi' \alpha'}  )\right.
\nonumber\\
& &
\left. +
 \Pr(\hat \phi \vert  \psi_{\phi' \alpha'}  ) \Pr(\hat \phi \vert  \psi_{\phi' \alpha'}  )
\right)\nonumber\\
\label{eq:PsiA}
\end{eqnarray}

First we show that if $\vert \psi^A_{\phi \alpha\phi' \alpha'}\rangle
\in \HH^{\leq 2^{-n}} $, that is if
$\dd( \phi ,\phi') \leq 2^{-n}$, then $\Pr[ \dd( \hat \phi ,\hat \phi') )\leq 5/2^n] \geq  2/3$, i.e. the acceptance probability is greater or equal then $2/3$.

To this end we consider each term in Equation~\eqref{eq:PsiA} separately, for instance consider term $\Pr(\hat \phi \vert  \psi_{\phi \alpha}  ) \Pr(\hat \phi' \vert  \psi_{\phi' \alpha'}  )$.
Now use the triangle inequality to obtain
\begin{equation}\label{Eq:Triangle}
\dd(\hat \phi, \hat \phi') \leq \dd(\hat \phi, \phi) + \dd(\phi, \phi') + \dd(\phi', \hat \phi')\ .
\end{equation}
 Hence if $\dd(\hat \phi, \hat \phi') > 5/ 2^n$ and $\dd( \phi ,\phi') \leq 2^{-n}$, then either $ \dd(\hat \phi, \phi) >2/2^n$ or $ \dd(\phi', \hat \phi') >2/2^n$.
 From Equation~\eqref{Eq:PhaseEst} the probability of at least one of the later events occurring is less than $1/3$.
 Hence if $\dd( \phi ,\phi') \leq 2^{-n}$, then
 $\Pr[ \dd( \hat \phi ,\hat \phi') )> 5/2^n] <  1/3$, and consequently the  probability of the complementary event is bounded by
  $\Pr[ \dd( \hat \phi ,\hat \phi') )\leq 5/2^n] \geq  2/3$. This is true for each term in Equation~\eqref{eq:PsiA}, and therefore also for $P(\hat \phi, \hat \phi'\vert \psi^A_{\phi \alpha\phi' \alpha'} )$.
  
Second we show that if 
$\vert \psi^A_{\phi \alpha\phi' \alpha'}\rangle
\in \HH^{\geq 9/2^{n}} $, that is
 if $\dd( \phi ,\phi') \geq  9/ 2^{-n}$, then
 $\Pr[ \dd( \hat \phi ,\hat \phi')  \leq  5/2^n] \leq  1/3$,
 i.e. the acceptance probability is less or equal than $1/3$.
  
To this end reason again for each term Equation~\eqref{eq:PsiA} separately. Use again the triangle inequality, and note that if $ \dd(\phi, \phi') \geq 9/2^n$ and 
  $\dd(\hat \phi, \hat \phi') \leq 5/2^n$, then either $ \dd(\hat \phi, \phi) \geq 2/2^n$ or $ \dd(\phi', \hat \phi') \geq 2/2^n$.
  Hence if $\dd( \phi ,\phi') \geq 9/ 2^{-n}$, then
 $\Pr[ \dd( \hat \phi ,\hat \phi')  \leq  5/2^n] \leq  1/3$.

{\bf The set of neighbouring states $\HH^{\leq 2^{-n}} $ is non-empty.}
Since $U_n$ acts on $n$ qubits, it has $2^n$ eigenstates, which form an orthonormal basis of the Hilbert space with eigenphases in $\phi\in [0,1)$. 
By the pigeonhole principle, there must be at least 2 eigenstates with eigenphases $\phi$, $\phi'$ satisfying $\dd( \phi ,\phi') \leq 2^{-n}$. 
\end{proof}

The following problem is a generalisation of the problem 
\emph{ multiple copies of eigenstates of commuting $k$-local Hamiltonian}, see Definition \ref{Prob:MCopies} 
to the case where the input is an oracle implementing unitary transformations, rather than by a commuting $k$-local Hamiltonian.

\begin{deff} \label{Prob:MultCopyU}
{\bf Multiple copies of eigenstates $U$.}
Given access to oracle $C$, denote by
$T^{eq}(n)$ the set of triples of equal eigenvalues, and by $\HH^{eq}(n)$ the corresponding subspace:
\begin{eqnarray}
T^{eq}(n)&=& \{ ( (\phi_1,\alpha_1), (\phi_2,\alpha_2),(\phi_3,\alpha_3) )\in S^{\times  3} \nonumber\\
& & : \phi_1=\phi_2=\phi_3\} \nonumber\\
\HH^{eq} (n)&=& \SPAN (
\{ \vert \psi_{\phi_1 \alpha_1}\rangle \vert \psi_{\phi_2 \alpha_2}\rangle \vert \psi_{\phi_3 \alpha_3}\rangle 
\nonumber\\
& &: ( (\phi_1,\alpha_1), (\phi_2,\alpha_2),(\phi_3,\alpha_3) \in T^{eq}
\} )\ ;\nonumber\\
\end{eqnarray}
and denote by $T^{neq}(n)$ the set of triples of eigenvalues where 
at least two 
of are significantly different, and by 
$\HH^{neq}(n)$ the corresponding subspace:
\begin{eqnarray}
T^{neq}(n)&=& \{ ( (\phi_1,\alpha_1), (\phi_2,\alpha_2),(\phi_3,\alpha_3) \in S^{\times  3} 
\nonumber\\
& &: d(\phi_1,\phi_2)>14/2^n
\nonumber\\& & {\rm\quad OR\ } d(\phi_1,\phi_3)>14/2^n 
\nonumber\\& & {\rm\quad OR\ }
d(\phi_2,\phi_3)>14/2^n 
\} \nonumber\\
\HH^{neq} (n)&=& \SPAN (
\{ \vert \psi_{\phi_1 \alpha_1}\rangle \vert \psi_{\phi_2 \alpha_2}\rangle \vert \psi_{\phi_3 \alpha_3}\rangle 
\nonumber\\
& &: ( (\phi_1,\alpha_1), (\phi_2,\alpha_2),(\phi_3,\alpha_3) \in  T^{neq}
\} )\ .\nonumber\\
\end{eqnarray}
\end{deff}

\begin{thm}
Given access to oracle $C$, and given $n$, there exists a $2/3$-total quantum verification procedure $Q_{eq}$ acting on $3n$ qubits such that the corresponding pair of relations 
$(\HH^{\geq 2/3\, , \,C}_{Q_{eq}}(n, \vert \psi \rangle), \HH^{\leq 1/3\, , \,C}_{Q_{eq}}(n, \vert \psi \rangle))\in \TFQMA^C(2/3,1/3)$ satisfy
\begin{eqnarray}
\HH^{\geq 2/3\, , \,C}_{Q_{eq}}(n, \vert \psi \rangle)&\supseteq &
\HH^{eq}(n) \ ,\nonumber\\
\HH^{\leq 1/3\, , \,C}_{Q_{eq}}(n, \vert \psi \rangle))&\supseteq &
\HH^{neq}(n)  \ .\nonumber\\
&&
\end{eqnarray}
\end{thm}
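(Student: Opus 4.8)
The plan is to mirror the proof of Theorem \ref{Thm:AlmostDegU}, replacing the pairwise comparison of two phase estimates by a three-way comparison and dropping the SWAP test, since here we want to \emph{accept} on equal-eigenvalue products rather than project onto an antisymmetric subspace. Concretely, I would define $Q_{eq}$ as follows: viewing the $3n$-qubit input as three $n$-qubit registers, use the oracle $C$ to run the phase estimation algorithm on each register to $n$ bits of precision, obtaining estimates $\hat\phi_1,\hat\phi_2,\hat\phi_3$; accept if and only if all three pairwise circle-distances satisfy $\dd(\hat\phi_i,\hat\phi_j)\leq \theta$ for a threshold $\theta$ of order $1/2^n$ fixed below, and otherwise reject. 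Note that only the two inclusions are claimed (not gaptedness): the behaviour on triples whose phases are close but unequal, or separated by an amount between the two thresholds, is left uncontrolled, which is why the conclusion is membership in $\TFQMA^C$ rather than $\TFgapQMA^C$.

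Next I would establish the eigenbasis of $Q_{eq}$. The product states $\vert\psi_{\phi_1\alpha_1}\rangle\vert\psi_{\phi_2\alpha_2}\rangle\vert\psi_{\phi_3\alpha_3}\rangle$ with $((\phi_i,\alpha_i))\in S(n)^{\times 3}$ form an orthonormal basis of $\HH_{3n}$. Since phase estimation leaves each eigenstate $\vert\psi_{\phi\alpha}\rangle$ unchanged and only populates the ancilla with the estimate distribution $|c_{\phi\hat\phi}|^2$, the acceptance probability on such a product factorizes over the three registers, and on a superposition of distinct products there is no interference, exactly as in Equations \eqref{eq:psiAnti}--\eqref{eq:ProbaPsiAnti}. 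By Theorems \ref{Thm:BlockStructure} and \ref{Thm:UniquenessBlockStructure} these products therefore constitute an eigenbasis of $Q_{eq}$ for the input $n$, and it suffices to bound the acceptance probability of each one.

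Then I would bound those acceptance probabilities using the phase estimation tail bound \eqref{Eq:PhaseEst} together with the triangle inequality, as in Theorem \ref{Thm:AlmostDegU}. For a product in $\HH^{eq}(n)$ all three true phases coincide, so if every estimate lands within the per-estimate tolerance of the common phase then all pairwise distances are at most $\theta$ and $Q_{eq}$ accepts; a union bound over the three estimates then yields acceptance $\geq 2/3$, giving $\HH^{eq}(n)\subseteq \HH^{\geq 2/3}_{Q_{eq}}(n)$. For a product in $\HH^{neq}(n)$ some pair of true phases is separated by more than $14/2^n$; by the triangle inequality the two corresponding estimates can fall within $\theta$ only if at least one of them is grossly inaccurate, and \eqref{Eq:PhaseEst} makes this event occur with probability $\leq 1/3$, giving $\HH^{neq}(n)\subseteq \HH^{\leq 1/3}_{Q_{eq}}(n)$. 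The constants --- the gap $14/2^n$ of Definition \ref{Prob:MultCopyU}, the acceptance threshold $\theta$, and the per-estimate tolerance --- are to be fixed so that both union bounds go through simultaneously.

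Finally, totality is immediate: any three copies of a single eigenstate, $\vert\psi_{\phi\alpha}\rangle^{\otimes 3}$, lie in $\HH^{eq}(n)$, so $\HH^{\geq 2/3}_{Q_{eq}}(n)\supseteq\HH^{eq}(n)$ is non-empty (indeed $\dim\HH^{eq}(n)\geq 2^n$) and $Q_{eq}$ is $2/3$-total. The main obstacle is quantitative rather than structural: because the comparison now involves \emph{three} independent phase estimates instead of two, the union bound of Theorem \ref{Thm:AlmostDegU} must be re-run over three terms, so the per-estimate failure probability extracted from \eqref{Eq:PhaseEst} has to be pushed below roughly $1/9$ --- for instance by carrying out phase estimation to a few bits beyond $n$ --- in order that three times it still leaves a $2/3$ versus $1/3$ separation. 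This is precisely what forces the enlarged gap $14/2^n$ in the definition of $\HH^{neq}(n)$, and getting these thresholds mutually consistent is the crux of the argument.
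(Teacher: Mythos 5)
Your proposal is correct and follows essentially the same route as the paper: phase estimation on each of the three registers, acceptance iff all pairwise circle-distances of the estimates are below a threshold, the product-state eigenbasis argument via the no-interference property of phase estimation, the triangle-inequality-plus-tail-bound analysis against Equation~\eqref{Eq:PhaseEst}, and totality from $\vert\psi_{\phi\alpha}\rangle^{\otimes 3}\in\HH^{eq}(n)$. The only (immaterial) difference is in how the constants are fixed: the paper keeps the precision at $n$ bits and takes the acceptance threshold $10/2^n$, using the tail bound with $k=5$ (per-estimate failure below $1/9$, matching your requirement) for acceptance and $k=2$ for rejection, rather than adding extra precision bits as you suggest.
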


\begin{proof}

{\bf Quantum verification procedure.} 
We first describe the quantum verification procedure $Q_{eq}$, which we view as acting on three $n$-qubit states.
Carry out the phase estimation algorithm on the three $n$ qubit states yielding outcomes $\hat \phi_1$, $\hat \phi_2$, $\hat \phi_3$. Accept if $\dd(\hat \phi_i , \hat \phi_j)\leq 10/2^n$ for the three pairs $(i,j)\in \{(1,2), (2,3), (3,1)\}$, otherwise reject.

{\bf Eigenbasis of the quantum verification procedure}
The basis of product states $
\{
\vert \psi_{\phi_1 \alpha_1}\rangle \vert \psi_{\phi_2 \alpha_2}\rangle \vert \psi_{\phi_3 \alpha_3}\rangle 
 \}$ constitute the eigenbasis of $Q_{eq}$. This follows from the remarks on the phase estimation algorithm made in the proof of Theorem \ref{Thm:AlmostDegU}.

{\bf Acceptance probability of states in $\HH^{eq}(n)$.}
We now show that on states of the form $\vert \psi_{\phi_1 \alpha_1}\rangle \vert \psi_{\phi_2 \alpha_2}\rangle \vert \psi_{\phi_3 \alpha_3}\rangle $ with $\phi_1=\phi_2=\phi_3$
the  quantum verification procedure $Q_{eq}$ will accept with probability greater than $2/3$. 

First, using Equation~\eqref{Eq:PhaseEst},  note that the probability that 
$\dd(\hat \phi_i , \phi_i)\leq k/2^n$ for $i=1,2,3$ simultaneously is lower bounded by 
$
\left(1-3/(2k-1)\right)$.

Second, the triangle inequality implies that if $\dd(\hat \phi_i , \phi_i)\leq k/2^n$ for $i=1,2,3$, then
 $\dd(\hat \phi_i , \hat \phi_j)\leq 2k/2^n$ for the three pairs $(i,j) \in \{(1,2), (2,3), (3,1)\}$.
 
 Setting $k=5$ yields the result.
 
 {\bf Acceptance probability of states in $\HH^{neq}$.}
The quantum verification algorithm accepts with probability less than $1/3$ on all states in $\HH^{neq}$.

Consider a state of the form
$\vert \psi_{\phi_1 \alpha_1} \rangle  \vert \psi_{\phi_2 \alpha_2} \rangle  \vert \psi_{\phi_3 \alpha_3} \rangle\in \HH^{neq}$. Consequently, there is at least one pair $(i,j)$ for which 
$\dd(\phi_i , \phi_j)>14/2^n$.
If this state accepts, then $\dd(\hat \phi_i , \hat \phi_j)\leq 10/2^n$.
Consequently, using the triangle inequality, either
 $\dd( \phi_i , \hat \phi_i) > 2/2^n$ or
$\dd( \phi_j , \hat \phi_j) > 2/2^n$. 
Using Equation~\eqref{Eq:PhaseEst} with $k=2$ shows that at least one of these events has probability less than $1/3$. Hence the overall acceptance probability of the state is less than $1/3$.
 
 {\bf $\HH^{eq}$ is non-empty.}
 Trivial.

\end{proof}

The problems based on Definitions \ref{Prob:AlmostDegU}  and \ref{Prob:MultCopyU}  
are expected to be hard because outputting an eigenstate of $U$ with a specified eigenvalue is expected to be hard in general. It is instructive however to consider variants of the problem that are easy. For instance outputting a random eigenstate of $U$ and the corresponding eigenvalue (up to precision $2^{-n}$) is easy: take the completely mixed state and run the phase estimation algorithm. The output of the algorithm will be an approximate eigenvalue $\hat \phi$, and the state after running the algorithm will be a superposition of eigenstates with eigenvalues close to $\hat \phi$. And if one carries out this procedure on one half of a maximally entangled state, one obtains a superposition of eigenstate times their complex conjugate (see remark in Section \ref{MultCopies}) This is why we request 3 copies in Definition \ref{Prob:MultCopyU}.

Note also that if we have additional information on the structure of $U$, constructing eigenstates may become easy. For instance suppose, as in Kitaev's factorization algorithm, that there is a set of orthogonal states on which $U$ acts like $U\vert \chi_j\rangle = \vert \chi_{j+1}\rangle$, where $j=0,...,N-1$, and where we identify $\vert \chi_{N}\rangle=\vert \chi_{0}\rangle$. Suppose 
also that we can efficiently implement the transformation $V$ which transforms the computational basis state $\vert j\rangle$ into $\vert \chi_j\rangle$: $V\vert j\rangle\vert 0\rangle
=\vert 0\rangle\vert \chi_j\rangle$. Then acting with $V$ on the state $N^{-1/2}\sum_{j=0}^N
e^{i2\pi jk/N}
\vert j\rangle\vert 0\rangle$ will yield an eigenstate of $U$ with eigenvalue $e^{i2\pi k/N}$.


 \section{Open Questions}\label{Sec:Disc}


We have provided several examples of problems belonging to $\TFQMA$, showing that it is an noteworthy complexity class. We sketch here some interesting open questions.

One of our examples is based on a quantum money scheme. Can one extend and define more precisely the relation between $\TFQMA$ and quantum money?

Can one find additional problems in $\TFQMA$?
Note that in the classical case there are many problems that belong to $\TFNP$, including some problems of real practical importance, such as local search problems and finding Nash equilibria. Are there problems of real practical importance in $\TFQMA$?

We have introduced some natural restrictions of 
$\QMA$ and $\TFQMA$:
gapped quantum verification procedures in  Section \ref{SubSec:GapQVP}, and 
$1$-- and/or $0$--quantum verification procedures in Section \ref{SubSec:10QVP}. 
Another natural restriction is to require that there is a unique witness, i.e. that the witness Hilbert space is one-dimensional. Can one find examples of this type?

When the witness is classical, the class $\QMA$ becomes $\CQMA$. When in addition the verifier is classical, one obtains the classical class $\MA$. One can define the functional problems associated to these  classes $\FCQMA$ and $\FMA$, and the corresponding total functions $\TFCQMA$ and $\TFMA$. In all these cases one can introduce gapped versions, and unique versions of the functional classes. Are there examples of problems that fall in these classes? (Note that \cite{AK07} provides evidence that the certificate for Group Non--Membership could be classical, in which case Theorem \ref{Thm:GNM} would have to be changed to reflect inclusion in  $\TFCQMA^B$).

What would be the consequences if some of these complexity classes coincide? What would be the consequences if some of these complexity classes are trivial, i.e. coincide with $\FBQP$, or with $\FBPP$ (the functional analog of $\BPP$)?

Total functional $\NP$ ($\TFNP$) can also be defined as the functional analog of $\NP\cap\coNP$ \cite{MP91}. We believe that one can similarly show that
$\TFQMA$ is the functional analog of $\QMA \cap \coQMA$. We hope to report on this result in a future publication.

In the case of $\TFNP$, there exist a number of syntactically defined subclasses which each contain some complete problems, such as Polynomial Local Search (PLS), Polynomial Parity Argument (PPA), Polynomial Parity Argument on a Directed Graph (PPAD), Polynomial Pigeonhole Principle (PPP). Are there  syntactically defined subclasses of $\TFQMA$? If these syntactically defined subclasses of $\TFQMA$ exist, do they have natural complete problems?
Do the syntactically defined subclasses of $\TFNP$ (such as PLS, PPA, PPAD, PPP, etc...) have quantum analogs?  
Could one show that the problems considered in section \ref{Sec:ProbTFQMA} are complete for some of these  syntactically defined subclasses. This would provide evidence for the hardness of these problems.
Note that several of the problems 
we have introduced are based on the pigeonhole principle which is  at the basis of class PPP. These problems may fit into a quantum analog of PPP.

\acknowledgments.

We thank Andr\'as Gily\'en, Han-Hsuan Lin, Frank Verstraete and Ronald de Wolf for useful discussions.
Our research was partially funded by the Singapore Ministry of Education and the National Research Foundation under grant
 R-710-000-012-135 and by the QuantERA ERA-NET Cofund project QuantAlgo. S.M. thanks the Center for Quantum Technologies, Singapore, where part of this work was carried out.

\bibliographystyle{plain}

\end{document}